\renewcommand{\ALG@name}{Protocol}
\newcommand{\SU}{\hspace{0.3mm}\Set U}
\newcommand{\SM}{\hspace{-0.15mm}\Set M}
\renewcommand*\env@matrix[1][\arraystretch]{%
  \edef\arraystretch{#1}%
 \hskip -\arraycolsep
  \let\@ifnextchar\new@ifnextchar
  \array{*\c@MaxMatrixCols c}}
    \def\CT@@do@color{%
      \global\let\CT@do@color\relax
            \@tempdima\wd\z@
            \advance\@tempdima\@tempdimb
            \advance\@tempdima\@tempdimc
    \advance\@tempdimb\tabcolsep
    \advance\@tempdimc\tabcolsep
    \advance\@tempdima2\tabcolsep
            \kern-\@tempdimb
            \leaders\vrule
    %^^A                     \@height\p@\@depth\p@
                    \hskip\@tempdima\@plus  1fill
            \kern-\@tempdimc
            \hskip-\wd\z@ \@plus -1fill }
\def\hlinew#1{%
  \noalign{\ifnum0=`}\fi\hrule \@height #1 \futurelet
   \reserved@a\@xhline}
\newcolumntype{"}{@{\hskip\tabcolsep\vrule width 0.8pt\hskip\tabcolsep}}   
\newcommand*{\argmin}{\operatornamewithlimits{argmin}\limits}
\newlength{\Oldarrayrulewidth}
\newcolumntype{L}[1]{>{\raggedright\let\newline\\\arraybackslash\hspace{0pt}}m{#1}}
\newcolumntype{C}[1]{>{\centering\let\newline\\\arraybackslash\hspace{0pt}}m{#1}}
\newcolumntype{R}[1]{>{\raggedleft\let\newline\\\arraybackslash\hspace{0pt}}m{#1}}
\definecolor{dblue}{rgb}{0,0,0.75}
\definecolor{dgreen}{rgb}{0,0.7,0}
\definecolor{lightblue}{rgb}{0.83,0.91,1.0}
\newcommand{\Clb}{\cellcolor{lightblue}}
\declaretheoremstyle[
spaceabove=4pt, spacebelow=4pt,
headfont=\normalfont\bfseries,
notefont=\normalfont, notebraces={(}{)},
headpunct={:},
bodyfont=\normalfont,
postheadspace=0.5em,
]{mynote}
\declaretheoremstyle[
spaceabove=4pt, spacebelow=4pt,
headfont=\normalfont,
notefont=\normalfont, notebraces={(}{)},
headpunct={:},
bodyfont=\normalfont,
postheadspace=0.5em,
]{mynote2}
\declaretheorem[style=mynote]{Remark}
\declaretheorem[style=mynote,name=Theorem]{Thm}
\declaretheorem[style=mynote,name=Theorem]{ThmA}
\declaretheorem[style=mynote,name=Lemma]{LemA}
\declaretheorem[style=mynote,name=Corollary]{CorA}
\acrodef{QED}{quantum entanglement distillation}
\acrodef{w.r.t.}{with respect to}
\acrodef{i.i.d.}{independent and identically distributed}
\acrodef{pmf}{probability mass function}
\acrodef{pdf}{probability density function}
\acrodef{cdf}{cumulative distribution function}
\acrodef{POVM}{positive operator-valued measure}
\acrodef{QBER}{quantum bit error rate}
\acrodef{QKD}{quantum key distribution}
\newcommand*{\QEDB}{\hfill\ensuremath{\square}}%
\begin{document}

\title{\parbox{17cm}{ Reliable Interval Estimation for the Fidelity of Entangled
 States in Scenarios with General Noise} }

\author{Liangzhong Ruan}
\affiliation{
School of Cyber Science and Engineering, Xi'an Jiaotong University, Xi'an, China
}
\orcid{0000-0002-5147-1289}
\email{liangzhongruan@xjtu.edu.cn}

\author{Bas Dirkse}
\affiliation{Qutech, Delft University of Technology, Delft, Netherlands}
\orcid{0000-0002-4391-6962}
\email{B.Dirkse@tudelft.nl}
\maketitle

\begin{abstract}
Fidelity estimation for entangled states constitutes an essential building block for quality control and error detection in quantum networks. Nonetheless, quantum networks often encounter heterogeneous and correlated noise, leading to excessive uncertainty in the estimated fidelity. In this paper, the uncertainty associated with the estimated fidelity under conditions of general noise is constrained by jointly employing random sampling, a thought experiment, and Bayesian inference, resulting in a credible interval for fidelity that is valid in the presence of general noise. The proposed credible interval incorporates all even moments of the posterior distribution to enhance estimation accuracy. Factors influencing the estimation accuracy are identified and analyzed. Specifically, the issue of excessive measurements is addressed, emphasizing the necessity of properly determining the measurement ratio for fidelity estimation under general noise conditions.
\end{abstract}

\section{Introduction}

Entanglement distributed in quantum networks \cite{Suzetal:J17} is a valuable resource that facilitates a variety of promising quantum applications such as secure quantum communication \cite{Zhouetal:J19}, distributed quantum computing \cite{SerManBos:J06}, enhanced sensing \cite{Xiaetal:J23}, and clock synchronization \cite{Kometal:J14}.
Notably, entanglement serves as the primary source of quantum advantage in the emerging quantum internet \cite{WehElkHan:J18,Panetal:J19,Lietal:J22}.
In this context, quantifying the quality of entangled states shared by network nodes 
is imperative for quantum networks.
To this end, fidelity estimation for entangled states is a promising candidate.
Fidelity quantifies the quality of quantum states \cite{Basetal:J11,ArrLazHelBal:J14,ZhaLuoWen-etal:J21} 
and can be estimated with separable quantum measurements and classical post-processing.
Fidelity estimation protocols for several types of states are designed in \cite{SomChiBer:J06,GuhLuGaoPan:J07,ZhuHay:J19}, 
a fidelity estimation protocol for general pure states using only Pauli observables is proposed in \cite{FlaLiu:J11},
and a low-complexity fidelity estimation protocol for mixed states is proposed in \cite{WanZhaCheetal:J23}.
Fidelity estimation can also be achieved by using quantum fingerprinting \cite{BuhCleWatWol:J01},
quantum state certification \cite{BadDonWri:J19},  quantum hypothesis testing \cite{HayMatTsu:J06,Hay:J09} and distributed inner product estimation \cite{AnsLanLiu:J22}.
Fidelity estimation protocols have been successfully  implemented in several recent experiments \cite{Luetal:J15,Kaletal:J17,Humetal:J18,Pom-etal:J21}.

Quantum networks often encounter heterogeneous and correlated noise. 
Nonetheless, the impact of such noise on fidelity estimation has yet to be thoroughly addressed.
As has been noted in the context of quantum key distribution rate characterization \cite{TomLimGisRen:J12}
and quantum channel capacity estimation \cite{PfiRolManTomWeh:J18}, noise that is not \ac{i.i.d.} can lead to atypical measurement outcomes, 
thereby introducing excessive uncertainty into the estimation process.
Similarly, managing the uncertainty in the estimated fidelity of entangled states under conditions of general noise presents a significant challenge in achieving reliable quality control for quantum networks.

This paper addresses the aforementioned challenge. 
Given that measurements collapse quantum states, we consider a setup in which the two nodes sample a subset of qubit pairs for measurement 
and estimate the average fidelity of unsampled pairs conditioned on the measurement outcome.
With this setup, \cite{Ruan:J23} proposes a method for efficiently generating a point estimate in scenarios with general noise.
Nevertheless, a point estimate alone is insufficient to characterize the uncertainty associated with the estimated fidelity. 
The main contribution of this paper are as follows:
\begin{itemize}
\item Employing random sampling, a thought experiment, and Bayesian inference to establish bounds on the effects of non-\ac{i.i.d.} noise and obtain a reliable interval estimate applicable to scenarios with general noise;
\item Characterizing factors that affect the estimation accuracy in scenarios with general noise, namely the number of measurements, the tightness of tail probability bound, and the error due to atypical quantum states;
\item Identifying the issue of excessive measurements, a phenomenon in which more measurements lead to less accurate estimates,
and providing a simple condition to prevent this undesirable phenomenon.
\end{itemize}

The remainder of the paper is organized as follows. Section~\ref{sec:protocol} details the system setup and the proposed protocol. 
Section~\ref{sec:theory} elaborates on the theoretical foundation underpinning the proposed protocol. 
Section~\ref{sec:factors} analyzes the factors that influence estimation accuracy under conditions of general noise. Section~\ref{sec:sim} outlines the numerical studies. 
Finally, Section~\ref{sec:conclusion} provides the conclusion.

Random variables and their realizations are denoted by upper- and lower-case letters respectively, e.g., $F$ and $f$.

\section{Proposed Protocol}
\label{sec:protocol}
This section describes the system setup, the proposed protocol, 
and a demonstrative application.
\subsection{System Setup}
Two remote nodes aim to share entangled qubit pairs over a noisy quantum channel.
All maximally entangled states are mutually convertible via local unitary operators.
Hence, without loss of generality,
we assume that the nodes aim to share qubit pairs in state $|\Psi^-\rangle$, where
\begin{align}
|\Psi^\pm \rangle =\frac{|01\rangle \pm |10\rangle}{\sqrt{2}}, \quad |\Phi^\pm \rangle =\frac{|00\rangle \pm |11\rangle}{\sqrt{2}}
\end{align}
are the Bell states.

The nodes share $N$ entangled qubit pairs with joint state $\V{\rho}_{\mathrm{all}}$.
In this paper, we consider general noise with no predefined statistical properties, i.e., $\V{\rho}_{\mathrm{all}}$ can be any $2N$-qubit state and the nodes have no prior information about $\V{\rho}_{\mathrm{all}}$.
To evaluate the quality of entanglement, the nodes sample $M(<N)$ number of qubit pairs to measure.
Specifically, denote the set of all qubit pairs by $\Set N$, with $|\Set N| = N$. 
In this case, the set of sampled pairs, $\Set{M}$, is drawn from all $M$-subsets of $\Set N$ with equal probability.

The nodes perform measurements according to Protocol~\ref{alg:fidelityest}, which was proposed in \cite{Ruan:J23} and minimizes the variance of the estimated fidelity.
Because measurements collapse quantum states, the nodes estimate the average fidelity of the unsampled pairs given the measurement outcome $\V{r}=\{r_n, n\in \SM\}$.
Specifically, denote $\V{\rho}_{\mathrm{all}}^{(\V{r})}$ as the joint state of the qubit pairs conditioned on the measurement outcome $\V{r}$.
In this case, the state of the $n$-th pair is
\begin{align}
\V{\rho}^{(\V{r})}_n = \mathrm{Tr}_{i\in{\Set N}\backslash \{n\}} \V{\rho}_{\mathrm{all}}^{(\V{r})},
\end{align}
and the average fidelity of the unsampled qubit pairs \ac{w.r.t.} the target state $|\Psi^-\rangle$ is 
\begin{align}
\bar{f}= \frac{1}{N-M}\sum_{n\in \Set N \backslash \SM}\langle\Psi^-|\V{\rho}^{(\V{r})}_n |\Psi^-\rangle.\label{eqn:Fbar}
\end{align}

To achieve the design goals, many quantum applications require that the fidelity of entangled states is above certain thresholds with a high probability \cite{WehElkHan:J18,Zhangetal:J22}.
To this end, one must obtain interval estimates for the fidelity.
In this paper, the interval estimate is given by the credible interval $\Set C$, the interval in which the actual fidelity lies with probability $\alpha$ conditioned on the measurement outcome $\V{r}$, i.e.,
\begin{align}
\Pr\big[\bar{F} \in {\Set C} \big| \V{r} \big]= \alpha,
\end{align}
where $\bar{F}$ denotes the average fidelity $\bar{f}$ considered as a random variable.
For a discussion on the choice of statistical paradigm, see  Appendix~\ref{sec:method}.

\subsection{Protocol Description}

\setcounter{algorithm}{-1}
\begin{figure}
\vspace{-2mm}
\begin{algorithm}[H]
\caption{Measurement operation}\label{alg:fidelityest}
\begin{algorithmic}[1]
\State{\em Preset measurement parameters.} The nodes select the sample set $\SM$ completely at random and generate $M$ number of \ac{i.i.d.} random variables $A_n \in\{x,y,z\}$, $n\in \SM$,
with distribution 
$
\Pr[A_n=u]=\frac{1}{3}, \quad u\in\{x,y,z\}.
$
\State{\em Perform measurements.}
For the qubit pair $n\in \Set{M}$,  both nodes measure the qubit in the $A_n$-basis.
If the measurement results of the two nodes match, record measurement outcome  $r_{n} = 1$, otherwise record $r_{n} = 0$.
\end{algorithmic}
\end{algorithm}
\end{figure}

\begin{figure}
\vspace{-2mm}
%remove the indent
\begin{algorithm}[H]
\caption{Interval estimate of fidelity}\label{alg:fidelityest_int}
\begin{algorithmic}[1]
\State{\em Process the measurement data.} 
Take the measurement outcomes $\{r_n, n\in \SM\}$ of Protocol~\ref{alg:fidelityest} as input, and
calculate the number of errors and and the \ac{QBER}
\begin{align}
e_{\Set M} = \sum_{n\in\Set M} r_{n}, \quad \varepsilon_{\Set M}= \dfrac{e_{\Set M}}{M}.
\end{align} 

\State{\em Characterize posterior distributions.} 
Calculate the optimal computed moment $T^*$: 
\begin{align}
T^* &= 2\bigg\lfloor\frac{-2\ln(1-\alpha) + 0.8}{2}\bigg\rceil,\label{eqn:Tstar}
\end{align}
and the $2t$-th central moments $M^{(2t)}_{\mathrm{c}}$, $t\in\{1,2,\ldots,\frac{T^*}{2}\}$, according to 
\begin{align}
M^{(2t)}_{\mathrm{c}}
&=\sum_{e=0}^{N-M}{\Set P}(e;{e_{\SM}})
\bigg(\frac{e}{N-M} - \varepsilon_{\SM} - \dfrac{\frac{1}{2}-\varepsilon_{\SM}}{M+1}\bigg)^t, \label{eqn:emoments}
\end{align}
\Statex where ${\Set P}(e;e_{\SM})$, the distribution function of $e$ parameterized by $e_{\SM}$, is expressed as
\begin{align}
{\Set P}(e;e_{\SM}) = & \bigg(\!\begin{array}{c}N\!-\! M\\ e\end{array}\!\bigg)\nonumber\\
&\frac{\mathcal{B}(e+e_{\SM}+\frac{1}{2},N-e- e_{\SM}+\frac{1}{2})}{\mathcal{B}(e_{\SM}+\frac{1}{2},M- e_{\SM}+\frac{1}{2})},\label{eqn:dis_embar}
\end{align} 
in which $\mathcal{B}(a,b)=\int_{0}^1 t^{a-1} (1-t)^{b-1} dt$ is the beta function.
\State {\em Output the interval estimate.}
Obtain the credible interval
\begin{align}
{\Set C}^{(T^*)} =&\big[\tilde{f}-\delta^{(T^*)}, \tilde{f}+\delta^{(T^*)} \big],\label{eqn:CT_pro}
\end{align}
\Statex where the center point of the interval
\begin{align}
\tilde{f}= 1- \dfrac{3}{2}\bigg(\varepsilon_{\Set M} + \dfrac{\frac{1}{2}-\varepsilon_{\Set M}}{M+1}\bigg),\label{eqn:tildeF}
\end{align}
and the radius of the interval
\begin{align}
\delta^{(T^*)} = \min_{\scriptsize t\in\{1,2,\ldots,\frac{T^*}{2}\} }\bigg\{\frac{3}{2}\bigg(\frac{M^{(2t)}_{\mathrm{c}}}{1-\alpha}\bigg)^{\frac{1}{2t}} \bigg\}.\label{eqn:deltastar-m}
\end{align}
\end{algorithmic}
\end{algorithm}
\end{figure}

Protocol~\ref{alg:fidelityest_int} processes the measurement outcome $\V{r}$ and returns the credible interval ${\Set C}^{(T^*)}$, which is reliable in scenarios with general noise.
The theoretical basis of this protocol is given in Section~\ref{sec:theory}, with the formal theorems and detailed derivations outlined in  Appendix~\ref{sec:Bayesian}. 
In particular, the main result is summerized by \thref{thm:cred_interval_m}.
\begin{Thm}[Interval estimation]\thlabel{thm:cred_interval_m}
In cases with general noise, the average fidelity $\bar{F}$ lies in the interval ${\Set C}^{(T^*)}$ defined in \eqref{eqn:CT_pro} with probablity at least $\alpha$ conditioned on the \ac{QBER} $\varepsilon_{\SM}$.
\end{Thm}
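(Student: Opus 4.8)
The plan is to reduce the quantum fidelity-estimation problem to a classical Bayesian inference problem on a binary population, and then to control the posterior tail with a family of moment inequalities. First I would establish the single-pair measurement identity underlying Protocol~\ref{alg:fidelityest}: for any two-qubit state $\V{\rho}$, averaging the match probability uniformly over the three Pauli bases acts as a twirl, $\tfrac13\sum_{u\in\{x,y,z\}}\Pi^{\mathrm{match}}_u = \tfrac23\big(\mathds{1}-|\Psi^-\rangle\langle\Psi^-|\big)$, so the expected error rate equals $\tfrac23\big(1-\langle\Psi^-|\V{\rho}|\Psi^-\rangle\big)$ exactly, with no assumption on $\V{\rho}$. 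Summing over the unsampled pairs gives $\bar f = 1-\tfrac32\,\bar q$, where $\bar q$ is their average match rate; this is the origin of the factor $\tfrac32$ in \eqref{eqn:tildeF} and \eqref{eqn:deltastar-m}, and it converts fidelity estimation into the estimation of $\bar q$.

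Second, to accommodate general (non-\ac{i.i.d.}, possibly correlated) noise I would invoke the random sampling together with a thought experiment. Imagine that all $N$ pairs are measured in independent random Pauli bases, producing a hypothetical binary match/no-match string; because $\SM$ is drawn uniformly from all $M$-subsets of $\Set N$, the sampled and unsampled entries of this string are exchangeable. Exchangeability is therefore supplied by the protocol's own randomization rather than by any property of $\V{\rho}_{\mathrm{all}}$, which is exactly what lets the argument survive general noise. Conditioned on observing $e_{\SM}$ errors in the sample, the number of errors $e$ among the $N-M$ unsampled pairs is governed by sampling from an unknown population, and Bayesian inference on the latent match rate with the conservative Jeffreys prior $\mathrm{Beta}(\tfrac12,\tfrac12)$ yields the beta--binomial posterior predictive $\Set P(e;e_{\SM})$ of \eqref{eqn:dis_embar}, under the identification $\bar F = 1-\tfrac32\,\tfrac{e}{N-M}$. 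A short computation of the beta--binomial mean gives $\mathbb{E}\big[\tfrac{e}{N-M}\big]=\tfrac{e_{\SM}+1/2}{M+1}=\varepsilon_{\SM}+\tfrac{1/2-\varepsilon_{\SM}}{M+1}$, so the posterior mean of $\bar F$ is precisely the interval center $\tilde f$ of \eqref{eqn:tildeF}, and $M^{(2t)}_{\mathrm{c}}$ in \eqref{eqn:emoments} is the $2t$-th central moment of $\tfrac{e}{N-M}$.

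Third, with the posterior in hand the interval follows from a moment bound. For each admissible order $t\in\{1,\dots,\tfrac{T^*}{2}\}$ the variable $(\bar F-\tilde f)^{2t}$ is nonnegative, so Markov's inequality gives
\begin{align}
\Pr\big[\,|\bar F-\tilde f|\ge \delta \,\big|\, \varepsilon_{\SM}\big]\le \frac{\mathbb{E}\big[(\bar F-\tilde f)^{2t}\big]}{\delta^{2t}} = \frac{(3/2)^{2t}\,M^{(2t)}_{\mathrm{c}}}{\delta^{2t}}.\nonumber
\end{align}
Requiring the right-hand side to equal $1-\alpha$ and solving for $\delta$ returns the radius $\tfrac32\big(M^{(2t)}_{\mathrm{c}}/(1-\alpha)\big)^{1/2t}$, so for this radius $\Pr[\bar F\in\Set C \mid \varepsilon_{\SM}]\ge\alpha$. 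Since every even order yields a separate valid interval, the smallest among them, obtained by taking the minimum over $t$ in \eqref{eqn:deltastar-m}, is itself valid; this minimizing radius is $\delta^{(T^*)}$, and the claim of \thref{thm:cred_interval_m} follows.

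The step I expect to be the main obstacle is the second one: rigorously certifying that the Bayesian credible interval built from $\Set P$ remains a valid probability-$\alpha$ statement for every admissible $\V{\rho}_{\mathrm{all}}$. The delicate points are that $\bar F$ is an intrinsically deterministic (if unknown) quantum quantity that must be recast as a random variable in a way the thought experiment makes legitimate, that finite exchangeability yields a hypergeometric conditional which must be bridged to the beta--binomial $\Set P$ through a prior on the unknown total error count, and that this prior must be shown to be conservative so that the guarantee holds in the worst case over correlated noise. The moment machinery of the third step and the measurement identity of the first step are, by contrast, essentially routine once this reduction is secured; the detailed justification is precisely what Section~\ref{sec:theory} and Appendix~\ref{sec:Bayesian} are expected to supply.
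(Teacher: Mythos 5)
Your overall route coincides with the paper's: a thought experiment in which all pairs are measured, reduction to classical sampling of a binary string (the paper justifies this via commutativity of the pairwise measurements, you via exchangeability induced by the random choice of $\SM$ --- same substance), a hierarchical Jeffreys prior yielding the beta--binomial posterior \eqref{eqn:dis_embar}, and higher-order Chebyshev--Markov bounds minimized over even orders. However, there is a genuine gap at exactly the step you flag as the ``main obstacle,'' and it is not a technicality: the identification $\bar F = 1-\frac{3}{2}\,\frac{e}{N-M}$ is false. The fidelity fixes only the \emph{expected} error rate of the unsampled pairs, $\bar f = 1-\frac{3}{2}\,\mathbb{E}\big[\mathcal{E}_{\SU}\big|\bar f,\varepsilon_{\SM}\big]$ (this is \eqref{eqn:ktoM-m}); conditioned on $\bar f$, the realized count $e$ carries additional measurement randomness, so $\bar F$ and $1-\frac{3}{2}\mathcal{E}_{\SU}$ are different random variables. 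Consequently, in your display
\begin{align}
\Pr\big[\,|\bar F-\tilde f|\ge \delta \,\big|\, \varepsilon_{\SM}\big]\le \frac{\mathbb{E}\big[(\bar F-\tilde f)^{2t}\big|\varepsilon_{\SM}\big]}{\delta^{2t}} = \frac{(3/2)^{2t}\,M^{(2t)}_{\mathrm{c}}}{\delta^{2t}},\nonumber
\end{align}
the Markov step is fine but the equality is an identity for the proxy variable $1-\frac{3}{2}\mathcal{E}_{\SU}$, not for $\bar F$, whose posterior moments you have not computed and cannot compute under general noise, since the likelihood $\Set{P}(\varepsilon_{\SM}|\bar f)$ is unknown.

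The missing idea --- the paper's \thref{lem:tailprob} --- is that the equality must be replaced by an inequality, obtained as follows. Because $\bar F$ is an affine function of the conditional expectation of $\mathcal{E}_{\SU}$ given $(\bar f,\varepsilon_{\SM})$, the tower property gives $\mathbb{E}[\bar F|\varepsilon_{\SM}] = 1-\frac{3}{2}\mathbb{E}[\mathcal{E}_{\SU}|\varepsilon_{\SM}]=\tilde f$, and Jensen's inequality applied to the convex function $x^{2t}$ yields
\begin{align}
\mathbb{M}^{(2t)}_{\mathrm{c}}\big[\bar F\big|\varepsilon_{\SM}\big]\;\le\;\Big(\frac{3}{2}\Big)^{2t}\mathbb{M}^{(2t)}_{\mathrm{c}}\big[\mathcal{E}_{\SU}\big|\varepsilon_{\SM}\big]
=\Big(\frac{3}{2}\Big)^{2t} M^{(2t)}_{\mathrm{c}}.\nonumber
\end{align}
In words, conditional expectation only contracts even central moments, so the beta--binomial moments you computed dominate those of the true fidelity, and Markov's inequality then gives the claimed coverage for $\bar F$ itself. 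This one-line Jensen argument (the chain \eqref{Mc2t-F-E} in the paper) is precisely what closes the gap you anticipated; with it inserted, the remainder of your proposal --- the Pauli-twirl identity, the classical reduction, the beta--binomial moments, and the minimum over $t\le T^*/2$ --- matches the paper's proof. One further remark: the paper does not attempt what you list as your third delicate point, namely proving that the Jeffreys prior is worst-case conservative over correlated noise; it adopts it as the standard noninformative prior for a one-dimensional inference problem and validates coverage numerically, so you should not expect that stronger statement to be available.
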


\begin{Remark}[Computational complexity of Protocol~\ref{alg:fidelityest_int}]
Computing the radius of the credible interval $\delta^*$ represents the main computational load of Protocol~\ref{alg:fidelityest_int}.
According to \eqref{eqn:deltastar-m}, the complexity of computing $\delta^*$ is determined by the computational load for calculating $M^{(2t)}_{\mathrm{c}}$, where $t\in\{1,2,\ldots,\frac{T^*}{2}\}$.

According to  \eqref{eqn:Tstar}, $T^*$ scales as $\Set{O}(-\ln(1-\alpha))$.
According to \eqref{eqn:emoments}, the complexity of computing $M^{(2t)}_{\mathrm{c}}$ scales linearly \ac{w.r.t.} the number of  unsampled qubit pairs, i.e., $\Set{O}(M-N)$.
Therefore, the overall complexity of Protocol~\ref{alg:fidelityest_int} scales at 
\begin{align}
\Set{O}(-\ln(1-\alpha)(M-N)),
\end{align}
which is a mild computational load.~\hfill~\qed
\end{Remark}

\subsection{Demonstrative application}
To illustrate the performance of the proposed protocol in scenarios with non-\ac{i.i.d.} noise, we consider a demonstrative example to compare the effects of two types of noise, i.e., \ac{i.i.d.} noise, with joint state of all qubit pairs
\begin{align}
\V{\rho}_{\mathrm{all}}=\otimes^{N} \big(\bar{f}_{\mathrm{all}}|\Psi^-\rangle\langle\Psi^-| 
+ (1-\bar{f}_{\mathrm{all}})|00\rangle\langle00|\big),
\end{align}
and heterogeneous noise, with joint state
\begin{align}
\V{\rho}_{\mathrm{all}}=\big(\otimes^{N\bar{f}_{\mathrm{all}}} |\Psi^-\rangle\langle\Psi^-|\big)\otimes \big(\otimes^{N(1-\bar{f}_{\mathrm{all}})}|00\rangle\langle00|\big),
\end{align}
where the average fidelity of all qubit pairs $\bar{f}_{\mathrm{all}}$ is selected such that $N\bar{f}_{\mathrm{all}}$ is an integer.

\begin{figure}[t] \centering
\includegraphics[scale=0.95]{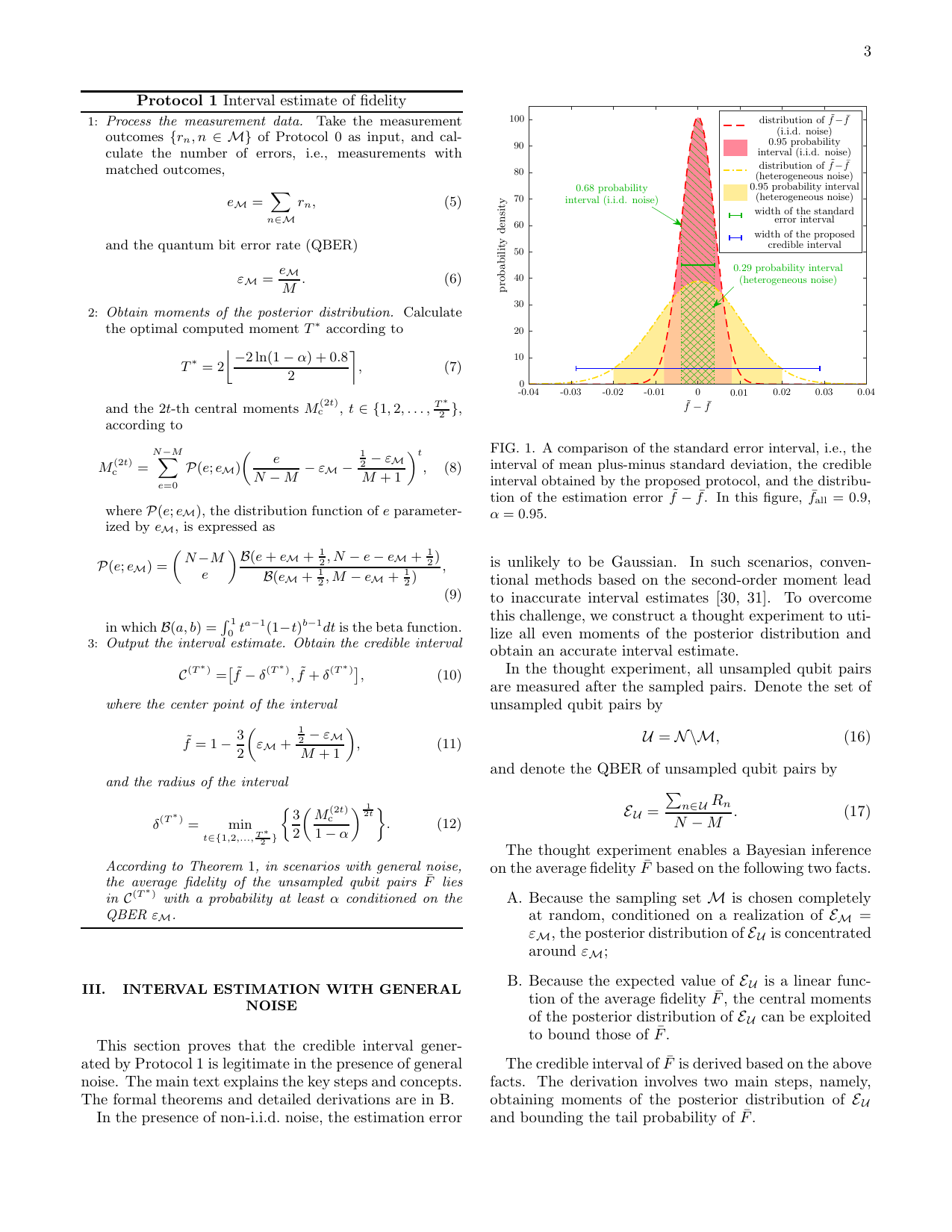}
\caption {A comparison of the standard error interval, i.e., the interval of mean plus-minus standard deviation, the credible interval obtained by the proposed protocol, and the distribution of the estimation error $\tilde{f}-\bar{f}$.
In this figure, $\bar{f}_{\mathrm{all}}=0.9$, $\alpha=0.95$. 
}
\label{fig_TvsInt_Exa}
\end{figure}

As Fig.~\ref{fig_TvsInt_Exa} shows, in cases with \ac{i.i.d.} or heterogeneous noise, the true fidelity is within the standard error interval  with a probability of $0.68$ and $0.29$, respectively.
This contrast shows that in cases with non-\ac{i.i.d.} noise, the standard error interval is no longer a reliable interval estimator.
The credible interval obtained by Protocol~\ref{alg:fidelityest_int} contains the $\alpha$ probability intervals of the actual fidelity for both types of noise, demonstrating its reliability.

\section{Interval Estimation with General Noise}
\label{sec:theory}
This section proves that the credible interval generated by Protocol~\ref{alg:fidelityest_int} is legitimate in the presence of general noise.
The main text explains the key steps and concepts.
The formal theorems and detailed derivations are in  \ref{sec:Bayesian}.

In the presence of non-\ac{i.i.d.} noise, the estimation error is unlikely to be Gaussian.
In such scenarios, conventional methods based on the second-order moment lead to inaccurate interval estimates \cite{LinBas:J00,XiHuYou:J12}.
To overcome this challenge, we construct a thought experiment to utilize all even moments of the posterior distribution and obtain an accurate interval estimate.

In the thought experiment, all unsampled qubit pairs are measured after the sampled pairs.
Denote the set of unsampled qubit pairs by
\begin{align}
\SU = \Set N \backslash \SM,
\end{align}
and denote the \ac{QBER} of unsampled qubit pairs by
\begin{align}
\mathcal{E}_{\SU} = \frac{\sum_{n\in {\SU}} R_{n}}{N-M}.
\end{align}

The thought experiment enables a Bayesian inference on the average fidelity $\bar{F}$ based on the following two facts.
\begin{itemize}
\item[A.] Because the sampling set $\Set M$ is chosen completely at random, conditioned on a realization of $\mathcal{E}_{\Set M}=\varepsilon_{\Set M}$, the posterior distribution of $\mathcal{E}_{\SU}$ is concentrated around $\varepsilon_{\Set M}$;
\item[B.] Because the expected value of $\mathcal{E}_{\SU}$ is a linear function of the average fidelity ${\bar{F}}$, the central moments of the posterior distribution of $\mathcal{E}_{\SU}$ can be exploited to bound those of  ${\bar{F}}$.
\end{itemize}

The credible interval of ${\bar{F}}$ is derived based on the above facts. 
The derivation involves two main steps, namely, obtaining moments of the posterior distribution of $\mathcal{E}_{\SU }$ and 
bounding the tail probability of~$\bar{F}$. 

\subsection{Characterize the moments of the posterior distribution} 
The first step characterizes the moments of the posterior distribution of $\mathcal{E}_{\SU}$, i.e., $\Set{P}(\varepsilon_{\SU}|\varepsilon_{\SM})$.

According to Bayes' theorem, to derive the posterior distribution $\Set{P}(\varepsilon_{\SU }|\varepsilon_{\SM})$,
it is necessary to determine the likelihood function $\Set{P}(\varepsilon_{\SM}|\varepsilon_{\SU})$.
However, the likelihood function is difficult to obtain in the presence of general noise.

To overcome the above challenge, we reconstruct the thought experiment by exploiting the commutativity of measurements on different qubit pairs.
Specifically, due to their commutativity, the order of measurements does not change the outcome $r_n\in\{0,1\}$, $n\in \Set N$.
Therefore, one can perform the thought experiment in a sequence that makes all measurements before sampling.
Subsequently, the sampling is done from a set of classical binary variables $r_n$, $n\in \Set N$, regardless of the statistical properties of the noise.
For such a classical random sampling process, the posterior distribution $\Set{P}(\varepsilon_{\SU }|\varepsilon_{\SM})$ and its moments can be obtained using Bayes' theorem, the properties of beta-binomial distributions, and the relationship between the raw and central moments.

The formal propositions of this step are presented as \thref{lem:unsampled} and \thref{lem:moments} in  Appendix~\ref{pf_lem:unsampled}.

\subsection{Bound the tail probability of fidelity}
The second step bounds the tail probability of average fidelity $\bar{F}$ based on the moments of the posterior distribution $\Set{P}(\varepsilon_{\SU }|\varepsilon_{\SM})$.

In the case with general noise, the likelihood function $\Set{P}(\varepsilon_{\SM} |\bar{f})$ of the average fidelity $\bar{F}$ is unknown.
Consequently, the interval estimate of $\bar{F}$ cannot be obtained by characterizing its posterior distribution.
A pivotal concept to overcome this challenge is that because the average fidelity ${\bar{F}}$ is a linear function of the expected value of $\mathcal{E}_{\SU}$,
the randomness of ${\bar{F}}$ shall be no more than that of the linear function of $\mathcal{E}_{\SU }$.
Consequently, the moments of the posterior distribution $\Set{P}(\varepsilon_{\SU }|\varepsilon_{\SM})$ can be used to bound the tail probability of ${\bar{F}}$.

More precisely, by applying the analysis in \cite[Lemma B.4]{Ruan:J23} to the measurement outcomes of the thought experiment, i.e., $\{r_n, n\in \SU = \Set N \backslash \SM\}$, the following relationship between the average fidelity ${\bar{F}}$ and \ac{QBER} $\mathcal{E}_{\SU }$ of the unsampled qubit pairs can be obtained.
\begin{align}
\bar{f} = 1-\frac{3}{2}\mathbb{E}\big[\mathcal{E}_{\SU }\big|\bar{f},\varepsilon_{\SM} \big].\label{eqn:ktoM-m}
\end{align}
According to \eqref{eqn:ktoM-m} and the tower property of the conditional expectation \cite[P.4]{Pit:15},
\begin{align}
\mathbb{E}[\bar{F}|\varepsilon_{\SM}] 
&= \mathbb{E}\bigg[1-\frac{3}{2}\mathbb{E}\big[\mathcal{E}_{\SU }\big|\bar{f},\varepsilon_{\SM} \big]\bigg|\varepsilon_{\SM}\bigg]\nonumber\\
&=1-\frac{3}{2}\mathbb{E}\big[\mathcal{E}_{\SU }\big|\varepsilon_{\SM} \big].\label{eqn:equalexp-m}
\end{align}
According to \eqref{eqn:ktoM-m} and \eqref{eqn:equalexp-m},  the $2t$-th central moment of the posterior distribution of $\bar{F}$, $\mathbb{M}^{(2t)}_{\mathrm{c}}\big[\bar{F}|\varepsilon_{\SM} \big]$, is upper bounded as follows.
\begin{subequations}
\begin{align}
&\mathbb{M}^{(2t)}_{\mathrm{c}}\big[\bar{F}|\varepsilon_{\SM} \big] \nonumber\\
&= \mathbb{E}\Big[\big(\bar{F}-\mathbb{E}[\bar{F}|\varepsilon_{\SM}]\big)^{2t}\Big|\varepsilon_{\SM}\Big] \label{Mc2t-F-E-a}\\
&=\bigg(\frac{3}{2}\bigg)^{2t} \mathbb{E}\Big[\big(\mathbb{E}[\Set{E}_{\SU }|\bar{f},\varepsilon_{\SM}]
-\mathbb{E}[\Set{E}_{\SU }|\varepsilon_{\SM}]\big)^{2t}\Big|\varepsilon_{\SM}\Big]
\label{Mc2t-F-E-b}\\
&=\bigg(\frac{3}{2}\bigg)^{2t} \mathbb{E}\Big[\big(\mathbb{E}\big[\Set{E}_{\SU }-\mathbb{E}[\Set{E}_{\SU }|\varepsilon_{\SM}]\big|\bar{f},\varepsilon_{\SM}\big]
\big)^{2t}\Big|\varepsilon_{\SM}\Big]
\label{Mc2t-F-E-c}\\
&\leq \bigg(\frac{3}{2}\bigg)^{2t} 
\mathbb{E}\Big[\mathbb{E}\big[\big(\Set{E}_{\SU }-\mathbb{E}[\Set{E}_{\SU }|\varepsilon_{\SM}]\big)^{2t}\big|\bar{f},\varepsilon_{\SM}\big]
\Big|\varepsilon_{\SM}\Big]
\label{Mc2t-F-E-d}\\
&= \bigg(\frac{3}{2}\bigg)^{2t} 
\mathbb{E}\big[\big(\Set{E}_{\SU }-\mathbb{E}[\Set{E}_{\SU }|\varepsilon_{\SM}]\big)^{2t}\big|\varepsilon_{\SM}\big]
\label{Mc2t-F-E-e}\\
&=\bigg(\frac{3}{2}\bigg)^{2t}\mathbb{M}^{(2t)}_{\mathrm{c}}\big[\Set{E}_{\SU }|\varepsilon_{\SM} \big],
\label{Mc2t-F-E-f}
\end{align}\label{Mc2t-F-E}
\end{subequations}
where \eqref{Mc2t-F-E-b} is obtained by substituting \eqref{eqn:ktoM-m} and \eqref{eqn:equalexp-m} into \eqref{Mc2t-F-E-a}, 
\eqref{Mc2t-F-E-d} is obtained by applying Jensen's inequality based on the fact that $x^{2t}$, $t\in \mathbb{Z}^+$, are convex functions,
and 
\eqref{Mc2t-F-E-e} is obtained by applying the tower property of conditional expectation.
In \eqref{Mc2t-F-E-f},  $\mathbb{M}^{(2t)}_{\mathrm{c}}\big[\Set{E}_{\SU }|\varepsilon_{\SM} \big]$ denotes the $2t$-th central moment of the posterior distribution of $\Set{E}_{\SU }$.

According to \eqref{Mc2t-F-E}, the even-order moments of the posterior distribution of $\mathcal{E}_{\SU }$ bound those of $\bar{F}$.
In this case, the application of Chebyshev's inequality (extended to higher moments) bounds the tail probability of $\bar{F}$.
The bounded tail probability of $\bar{F}$ leads to the credible interval ${\Set C}^{(T)}$, where $T$ indicates the maximum computed moment.

The formal propositions of this step are presented as \thref{lem:tailprob} and \thref{thm:cred_interval} in  Appendix~\ref{sec:tailprob}.

\subsection{Determine the maximum computed moment}
The above procedure yields the credible interval ${\Set C}^{(T)}$.
The following note presents a criterion for selecting the maximum computed moment $T$.

According to \thref{thm:cred_interval}, the credible interval ${\Set C}^{(T)}$ is legitimate in scenarios with general noise for all $T$ values that are even positive integers.
Selecting a larger $T$ increases the computational cost but can yield credible intervals with smaller widths and thus increase the estimation accuracy.
As shown in Fig.~\ref{fig_Cre_Int_width}, increasing $T$ significantly increases the accuracy of interval estimates.
In this context, it is desirable to find an optimal maximum computed moment $T^*$ that gives the most accurate estimate with a low computational cost.

We address the issue of identifying the optimal maximum computed moment $T^*$ by exploiting the fact that beta-binomial distributions approach normal distributions in the large sampling region \cite{Das:B10}.
The obtained expression for $T^*$ is given in \eqref{eqn:Tstar}. 
The detailed derivation can be found in  Appendix~\ref{sec:Tstar}.

\begin{figure}[t] \centering
\hspace*{-4mm}\includegraphics[scale=0.95]{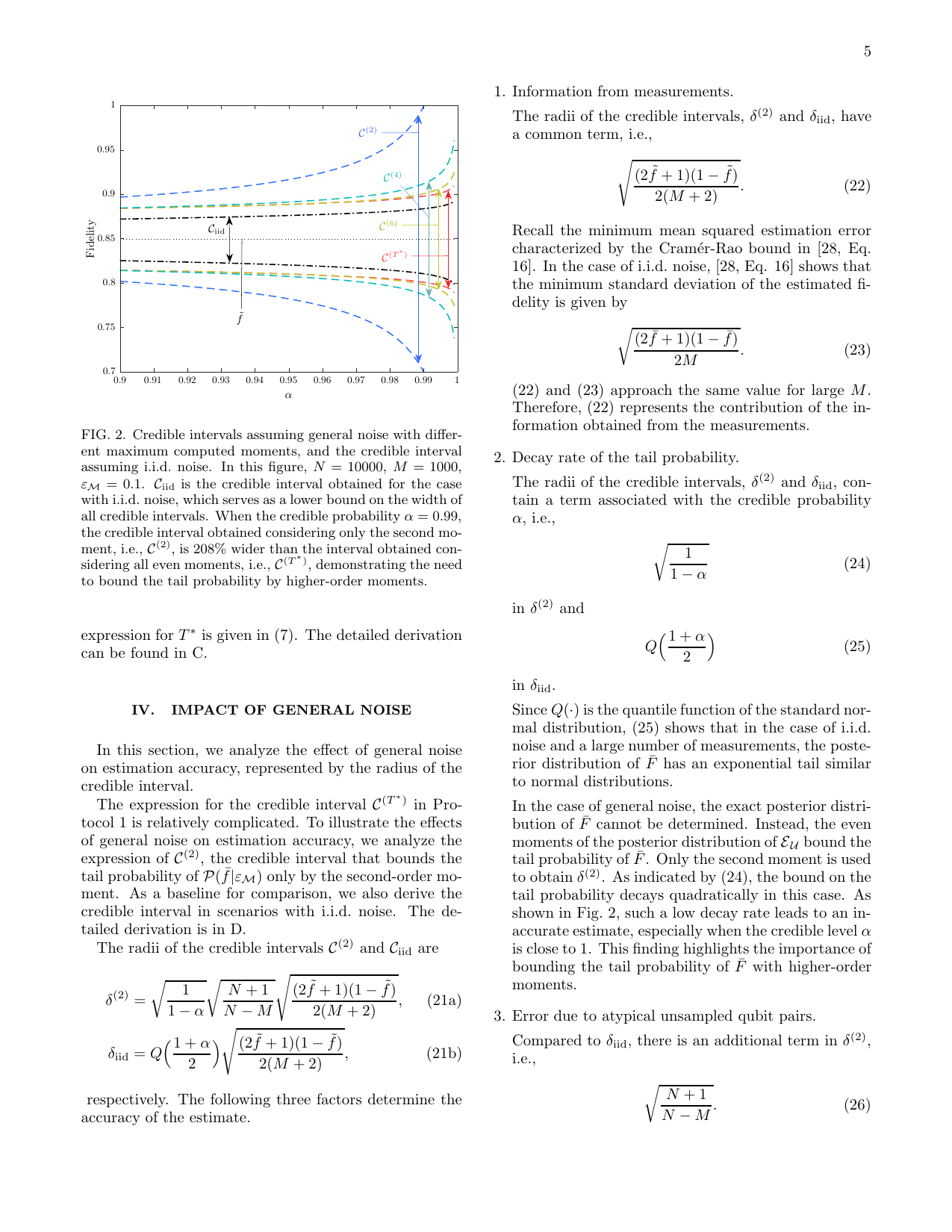}
\caption {Credible intervals assuming general noise with different maximum computed moments, and the credible interval assuming \ac{i.i.d.}~noise. In this figure, $N=10000$, $M=1000$, $\varepsilon_{\SM} =0.1$.
$\Set{C}_{\mathrm{iid}}$ is the credible interval obtained for the case with \ac{i.i.d.} noise, which serves as a lower bound on the width of all credible intervals. 
When the credible probability $\alpha = 0.99$, 
the credible interval obtained considering only the second moment, i.e., $\Set{C}^{(2)}$, is $208\%$ wider than the interval obtained considering all even moments, i.e., $\Set{C}^{(T^*)}$, demonstrating the need to bound the tail probability by higher-order moments.}
\label{fig_Cre_Int_width}
\end{figure}

\section{Impact of General Noise}
\label{sec:factors}
In this section, we analyze the effect of general noise on estimation accuracy, represented by the radius of the credible interval.

The expression for the credible interval $\Set{C}^{(T^*)}$ in Protocol~\ref{alg:fidelityest_int} is relatively complicated.
To illustrate the effects of general noise on estimation accuracy,  we analyze the expression of ${\Set C}^{(2)}$, the credible interval that bounds the tail probability of $\Set{P}(\bar{f}|\varepsilon_{\SM})$ only by the second-order moment. 
As a baseline for comparison, we also derive the credible interval in scenarios with \ac{i.i.d.} noise. 
The detailed derivation is in  Appendix~\ref{sec:iid}.

The radii of the credible intervals $\Set{C}^{(2)}$ and $\Set{C}_{\mathrm{iid}}$
are
\begin{subequations}
\begin{align}
\delta^{(2)}&=  \sqrt{\frac{1}{1-\alpha}}\sqrt{\frac{N+1}{N-M}}\sqrt{\frac{(2\tilde{f}+1)(1-\tilde{f})}{2(M+2)}}, \label{eqn:delta2}\\
\delta_{\mathrm{iid}} &= Q\Big(\frac{1+\alpha}{2}\Big)\sqrt{\frac{(2\tilde{f}+1)(1-\tilde{f})}{2(M+2)}}\label{eqn:delta-iid},
\end{align}\label{eqn:radii}
\end{subequations}
respectively. The following three factors determine the accuracy of the estimate.

\begin{itemize}[leftmargin=5mm]
\item[1.] Number of measurements.

The radii of the credible intervals, $\delta^{(2)}$ and $\delta_{\mathrm{iid}}$, have a common term, i.e., 
\begin{align}
\sqrt{\frac{(2\tilde{f}+1)(1-\tilde{f})}{2(M+2)}}.\label{eqn:factor1}
\end{align}
Recall the minimum mean squared estimation error characterized by the Cram\'er-Rao bound in \cite[Eq. 16]{Ruan:J23}.
In the case of \ac{i.i.d.} noise, \cite[Eq. 16]{Ruan:J23} shows that the minimum standard deviation of the estimated fidelity is given by
\begin{align}
\sqrt{\frac{(2\bar{f}+1)(1-\bar{f})}{2M}}.\label{eqn:errorbound-iid}
\end{align}
\eqref{eqn:factor1} and \eqref{eqn:errorbound-iid} approach the same value for large $M$.
Therefore, \eqref{eqn:factor1} represents the contribution from the $M$ number of measurements.

\item[2.] Tightness of tail probability bound.

The radii of the credible intervals, $\delta^{(2)}$ and $\delta_{\mathrm{iid}}$, contain a term associated with the credible probability $\alpha$, i.e.,
\begin{align}
\sqrt{\frac{1}{1-\alpha}}\label{eqn:sqrt-alpha}
\end{align}
in $\delta^{(2)}$ and 
\begin{align}
Q\Big(\frac{1+\alpha}{2}\Big)\label{eqn:Q-alpha}
\end{align}
in $\delta_{\mathrm{iid}}$.

Since $Q(\cdot)$ is the quantile function of the standard normal distribution, \eqref{eqn:Q-alpha} shows that in the case of \ac{i.i.d.} noise and a large number of measurements, the posterior distribution of $\bar{F}$ has an exponential tail similar to normal distributions.

In the case of general noise, the exact posterior distribution of $\bar{F}$ cannot be determined.
Instead, the even moments of the posterior distribution of $\mathcal{E}_{\SU }$ bound the tail probability of $\bar{F}$.
Only the second moment is used to obtain $\delta^{(2)}$. 
As indicated by \eqref{eqn:sqrt-alpha}, the bound on the tail probability decays quadratically in this case.
As shown in Fig.~\ref{fig_Cre_Int_width}, such a low decay rate leads to an inaccurate estimate, especially when the credible level $\alpha$ is close to $1$.
This finding highlights the importance of bounding the tail probability of $\bar{F}$ with higher-order moments.

\item[3.] Error due to atypical quantum states.

Compared to $\delta_{\mathrm{iid}}$, there is an additional term in $\delta^{(2)}$, i.e., 
\begin{align}
\sqrt{\frac{N+1}{N-M}}.\label{eqn:perturbation}
\end{align}
The value of \eqref{eqn:perturbation} is always greater than $1$ and increases monotonically as the number of measurements, $M$, approaches the total number of qubit pairs, $N$.
This term represents the additional estimation error introduced by the deviation between the average fidelity of all qubit pairs, $\bar{f}_{\mathrm{all}}$
and that of the unsampled ones, $\bar{f}$.

In scenarios with \ac{i.i.d.} noise, all qubit pairs have the same fidelity. 
However, in scenarios with general noise, $\bar{f}$ may deviate from $\bar{f}_{\mathrm{all}}$, leading to the additional estimation error factor expressed by \eqref{eqn:perturbation}.
In particular, as $M$ approaches $N$, the average fidelity of a smaller number of unsampled qubit pairs is likely to deviate more from that of all pairs, making \eqref{eqn:perturbation} a monotonically increasing function of $M$.
Fig.~\ref{fig_betabin} shows how the presence of a smaller number of unsampled qubit pairs leads to a less accurate estimate.
\end{itemize}

\begin{figure}[t] \centering
\vspace{-0.5mm}
\hspace*{-3mm}\includegraphics[scale=0.95]{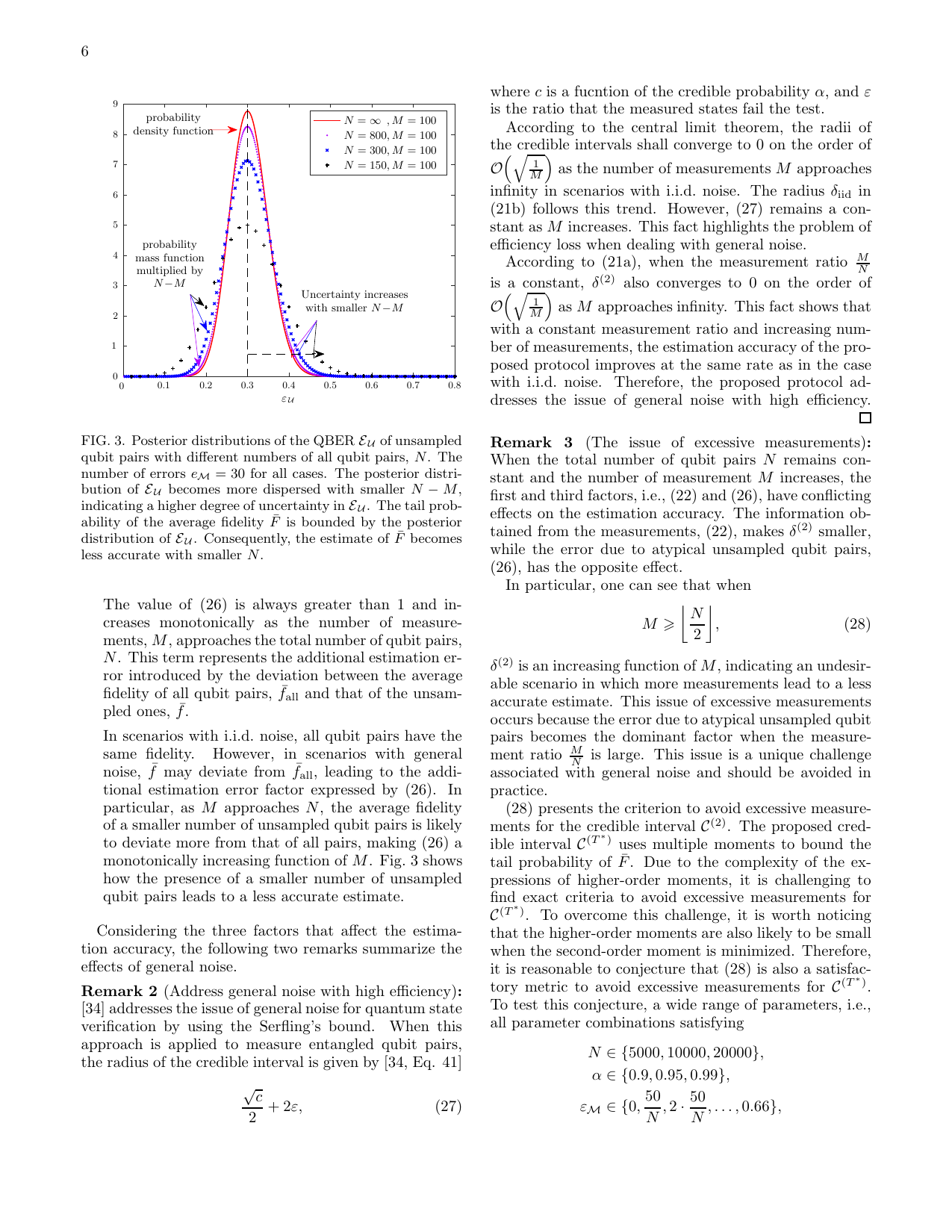}
\caption {Posterior distributions of the \ac{QBER} $\mathcal{E}_{\SU }$ of unsampled qubit pairs with different numbers of all qubit pairs, $N$.
The number of errors $e_{\SM}=30$ for all cases.
The posterior distribution of $\Set E_{\SU}$ becomes more dispersed with smaller $N-M$, indicating a higher degree of uncertainty in $\Set E_{\SU}$.
The tail probability of the average fidelity $\bar{F}$ is bounded by the posterior distribution of $\Set E_{\SU}$.
Consequently, the estimate of $\bar{F}$ becomes less accurate with smaller $N$.
}
\label{fig_betabin}
\end{figure}

Considering the three factors that affect the estimation accuracy, the following two remarks summarize the effects of general noise.

\begin{Remark}[Address general noise with high efficiency]
\cite{TakManMorMizFig:J19} addresses the issue of general noise for quantum state verification by using the Serfling’s bound.
When this approach is applied to measure entangled qubit pairs, the radius of the credible interval is given by \cite[Eq. 41]{TakManMorMizFig:J19}
\begin{align}
\frac{\sqrt{c}}{2} + 2\varepsilon, \label{eqn:Radi-Serf}
\end{align}
where $c$ is a fucntion of the credible probability $\alpha$, and $\varepsilon$ is the ratio that the measured states fail the test.

According to the central limit theorem, the radii of the credible intervals shall converge to $0$ on the order of $\mathcal{O}\Big(\sqrt{\frac{1}{M}}\Big)$ as the number of measurements $M$ approaches infinity in scenarios with \ac{i.i.d.} noise.
The radius $\delta_{\mathrm{iid}}$ in \eqref{eqn:delta-iid} follows this trend.
However, \eqref{eqn:Radi-Serf} remains a constant as  $M$ increases.
This fact highlights the problem of efficiency loss when dealing with general noise.
 
According to \eqref{eqn:delta2}, when the measurement ratio $\frac{M}{N}$ is a constant,
$\delta^{(2)}$ also converges to $0$ on the order of $\mathcal{O}\Big(\sqrt{\frac{1}{M}}\Big)$ as $M$ approaches infinity.
This fact shows that with a constant measurement ratio and increasing number of measurements, the estimation accuracy of the proposed protocol improves at the same rate as in the case with \ac{i.i.d.} noise.
Therefore, the proposed protocol addresses the issue of general noise with high efficiency.
~\hfill~\qed
\end{Remark}

\begin{Remark}[The issue of excessive measurements]
\label{remark:excessive}
When the total number of qubit pairs $N$ remains constant and the number of measurement $M$ increases, the first and third factors, i.e., \eqref{eqn:factor1} and \eqref{eqn:perturbation}, have conflicting effects on the estimation accuracy.
The information obtained from the measurements, \eqref{eqn:factor1}, makes $\delta^{(2)}$ smaller, while the error due to atypical unsampled qubit pairs, \eqref{eqn:perturbation}, has the opposite effect.

In particular, one can see that when
\begin{align}
M \geq \bigg\lfloor \frac{N}{2} \bigg\rfloor,\label{eqn:M-excessive}
\end{align}
$\delta^{(2)}$ is an increasing function of $M$, indicating an undesirable scenario in which more measurements lead to a less accurate estimate.
This issue of excessive measurements occurs because the error due to atypical unsampled qubit pairs becomes the dominant factor when the measurement ratio $\frac{M}{N}$ is large.
This issue is a unique challenge associated with general noise and should be avoided in practice.

\eqref{eqn:M-excessive} presents the criterion to avoid excessive measurements for the credible interval $\Set{C}^{(2)}$.
The proposed credible interval $\Set{C}^{(T^*)}$ uses multiple moments to bound the tail probability of $\bar{F}$.
Due to the complexity of the expressions of higher-order moments, it is challenging to find exact criteria to avoid excessive measurements for $\Set{C}^{(T^*)}$.
To overcome this challenge, it is worth noticing that the higher-order moments are also likely to be small when the second-order moment is minimized.
Therefore, it is reasonable to conjecture that \eqref{eqn:M-excessive} is also a satisfactory metric to avoid excessive measurements for $\Set{C}^{(T^*)}$.
To test this conjecture, a wide range of parameters, i.e., all parameter combinations satisfying
\begin{align*}
N&\in\{5000,10000,20000\},\\ 
\alpha &\in \{0.9, 0.95, 0.99\},\\
\varepsilon_{\SM}&\in\{0,\frac{50}{N},2\cdot\frac{50}{N},\ldots,0.66\},
\end{align*} 
are tested \footnote{\label{ft:nohighBER} Becase $\mathbb{E}\big[{\Set E}_{\SU }\big] = \frac{2}{3}$ when $\bar{f}=0$, cases with $\varepsilon_{\SM}>\frac{2}{3}$ are unlikely to occur in practice. Hence, such cases are not considered.}.
%For each combination, the $M$ that minimizes the radius of $\Set{C}^{(T^*)}$ is computed.
These tests show that \eqref{eqn:M-excessive} characterizes the boundary of excessive measurements for $\Set{C}^{(T^*)}$ as long as the \ac{QBER} $\varepsilon_{\SM}$ satisfies
\begin{align}
\varepsilon_{\SM} \geq 0.01.
\end{align} 
Fig.~\ref{fig_Cre_Int_minM} shows the test results of a few representative parameter combinations.

Practical quantum systems rarely yield a \ac{QBER} $\varepsilon_{\SM}<0.01$. 
Therefore, \eqref{eqn:M-excessive} is an accurate metric to avoid excessive measurements in practice.
This aspect means that without prior information of the noise, the measurement ratio $\frac{M}{N}$ shall not exceed $0.5$.
~\hfill~\qed
\end{Remark}

\begin{figure}[t] \centering
\includegraphics[scale=0.95]{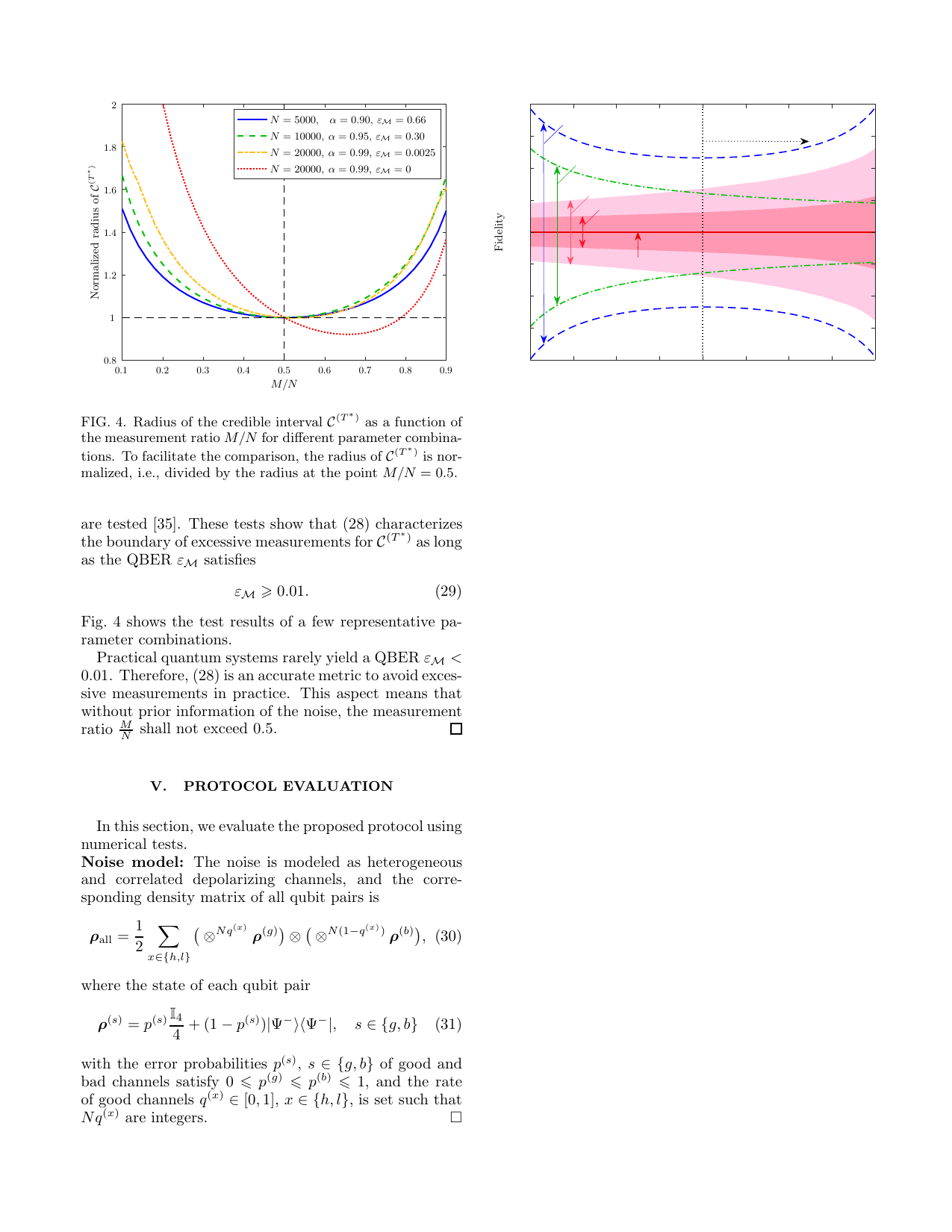}
\caption {Radius of the credible interval $\Set{C}^{(T^*)}$ as a function of the measurement ratio ${M}/{N}$ for different parameter combinations. To facilitate the comparison, the radius of $\Set{C}^{(T^*)}$ is normalized, i.e., divided by the radius at the point ${M}/{N}=0.5$.
}
\label{fig_Cre_Int_minM}
\end{figure}

\section{Protocol Evaluation}
\label{sec:sim}
In this section, we evaluate the proposed protocol using numerical tests.

\noindent {\bf Noise model:}
\label{subsec:perf_eva}
The noise is modeled as heterogeneous and correlated depolarizing channels, and the corresponding density matrix of all qubit pairs is
\begin{align}
\V{\rho}_{\mathrm{all}} = \frac{1}{2}\sum_{x\in \{h,l\}}\big(\otimes^{N q^{(x)}} \V{\rho}^{(g)}\big) \otimes \big(\otimes^{N (1-q^{(x)})} \V{\rho}^{(b)}\big), 
\end{align}
where the state of each qubit pair
\begin{align}
\V{\rho}^{(s)} =  p^{(s)}\frac{\mathbb{I}_4}{4} + (1-p^{(s)})|\Psi^-\rangle\langle\Psi^-|, \quad s\in\{g,b\}
\end{align}
with the error probabilities $p^{(s)}$, $s\in\{g,b\}$ of good and bad channels satisfy $0\leq p^{(g)} \leq  p^{(b)} \leq 1$,
and the rate of good channels $q^{(x)}\in[0,1]$, $x\in\{h,l\}$, is set such that $Nq^{(x)}$ are integers.~\QEDB

The difference between the error probabilities of the bad and the good channels, i.e., 
\begin{align}
d= p^{(b)}-p^{(g)}\in[0,1],
\end{align} 
represents the degree of heterogeneity and correlation in the noise.
In particular,  the noise is \ac{i.i.d.}~when $d=0$.

The representative test computes the credible intervals assuming \ac{i.i.d.}~and general noise, i.e., $\Set{C}_{\mathrm{iid}}$ and $\Set{C}^{(T^*)}$  as functions of the number of measurements, $M$.
To generate fixed credible intervals for each $M$, this test considers the special case in which the realized \ac{QBER} $\varepsilon_{\SM}$ is equal to its expected value.
Fig.~\ref{fig_True_VS_Int_M} shows the plot of credible intervals.

\begin{figure}[t] \centering
\hspace*{-0.4cm}\includegraphics[scale=0.95]{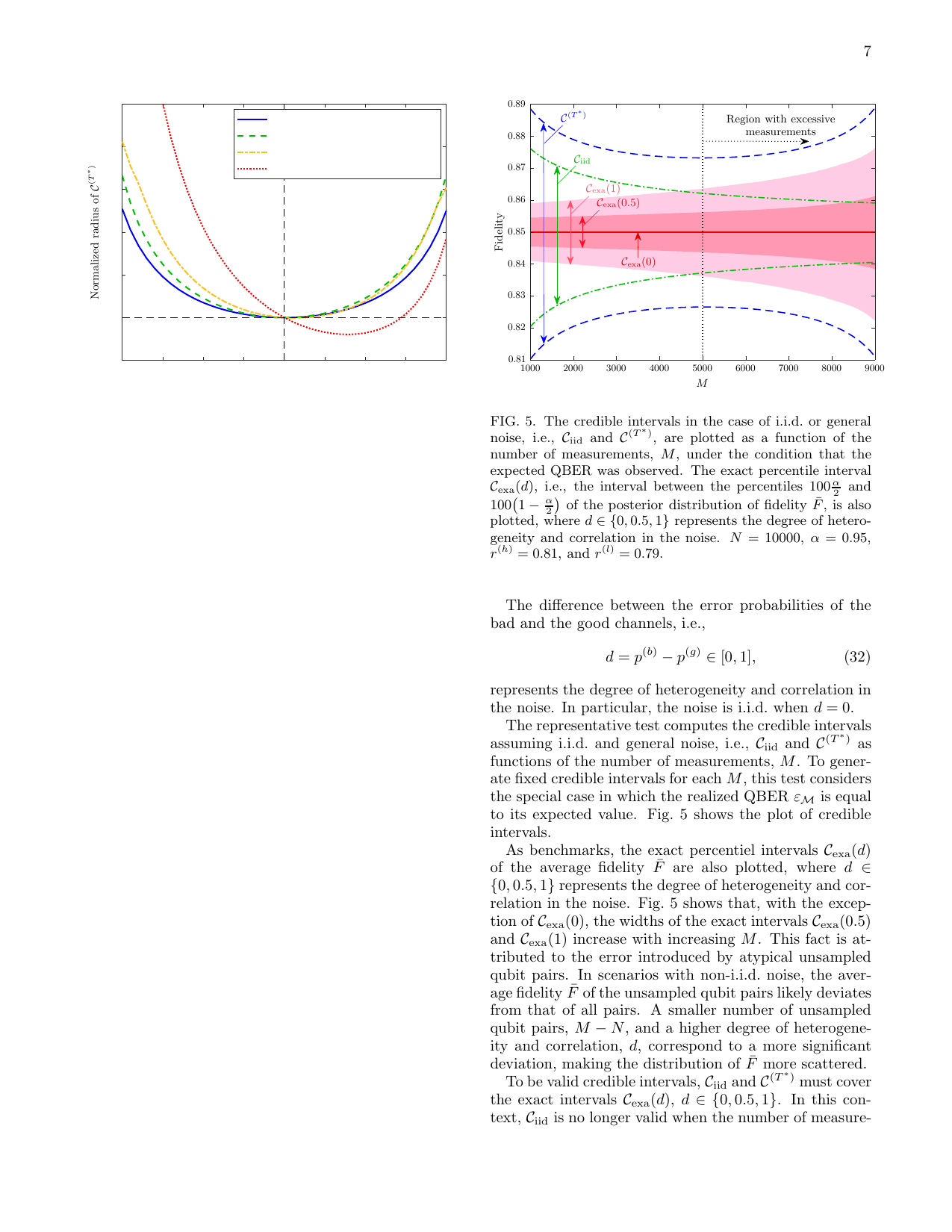}
\caption {The credible intervals in the case of \ac{i.i.d.}~or general noise, i.e., $\Set{C}_{\mathrm{iid}}$ and $\Set{C}^{(T^*)}$, are plotted as a function of the number of measurements, $M$, under the condition that the expected \ac{QBER} was observed.
The exact percentile interval $\Set{C}_{\mathrm{exa}}(d)$, i.e., the interval between the percentiles $100\frac{\alpha}{2}$ and $100\big(1-\frac{\alpha}{2}\big)$ of the posterior distribution of fidelity $\bar{F}$, is also plotted, where $d\in\{0,0.5,1\}$ represents the degree of heterogeneity and correlation in the noise.
$N=10000$, $\alpha = 0.95$, $r^{(h)} = 0.81$, and $r^{(l)} = 0.79$. }
\label{fig_True_VS_Int_M}
\end{figure}

As benchmarks, the exact percentiel intervals $\Set{C}_{\mathrm{exa}}(d)$ of the average fidelity $\bar{F}$ are also plotted,
where $d\in\{0,0.5,1\}$ represents the degree of heterogeneity and correlation in the noise.
Fig.~\ref{fig_True_VS_Int_M} shows that, with the exception of $\Set{C}_{\mathrm{exa}}(0)$, the widths of the exact intervals $\Set{C}_{\mathrm{exa}}(0.5)$ and $\Set{C}_{\mathrm{exa}}(1)$ increase with increasing $M$.
This fact is attributed to the error introduced by atypical unsampled qubit pairs.
In scenarios with non-\ac{i.i.d.} noise, the average fidelity $\bar{F}$ of the unsampled qubit pairs likely deviates from that of all pairs.
A smaller number of unsampled qubit pairs, $M-N$, and a higher degree of heterogeneity and correlation, $d$, 
correspond to a more significant deviation, making the distribution of $\bar{F}$ more scattered.

To be valid credible intervals, $\Set{C}_{\mathrm{iid}}$ and $\Set{C}^{(T^*)}$ must cover intervals $\Set{C}_{\mathrm{exa}}(d)$, $d\in\{0,0.5,1\}$.
In this context, $\Set{C}_{\mathrm{iid}}$ is no longer valid when the number of measurements, $M$, and the degree of heterogeneity, $d$, increase above certain thresholds.
In contrast, $\Set{C}^{(T^*)}$ is valid for all values of $M$ and $d$, proving its reliability.
This result highlights the need to use the proposed estimation protocol in scenarios with non-\ac{i.i.d.} noise.

Moreover, $\Set{C}^{(T^*)}$ is narrowest when the measurement ratio $\frac{M}{N}=0.5$, which is consistent with Remark~\ref{remark:excessive}.

In addition to this representative test, other tests are performed to investigate the performance of the proposed protocol in different scenarios. Fig.~\ref{fig_True_VS_Int_Joint} shows the plot of the correct probability of the credible intervals, i.e., the probability that the true fidelity $\bar{F}$ lies within the credible intervals, $\Set{C}^{(T^*)}$ or $\Set{C}_{\mathrm{iid}}$, averaged over all realizations.

Fig.~\ref{fig_True_VS_Int_Joint}A shows the plot of the correct probabilities as a function of the number of measured qubit pairs $M$. 
The correct probability of $\Set{C}_{\mathrm{iid}}$ decreases significantly with increasing $M$, which is consistent with the results shown in Fig.~\ref{fig_True_VS_Int_M}.

Fig.~\ref{fig_True_VS_Int_Joint}B shows the plot of correct probabilities as a function of the degree of heterogeneity and correlation $d$. 
When $d=0$, the noise is \ac{i.i.d.}, so the correct probability of $\Set{C}_{\mathrm{iid}}$ is equal to the credible probability $\alpha=0.95$.
However, this probability decreases significantly with increasing $d$ due to higher heterogeneity and correlation level.

Fig.~\ref{fig_True_VS_Int_Joint}C shows the plot of correct probabilities as a function of the credible probability $\alpha$. 
The average correct probability of the credible interval assuming \ac{i.i.d.}~noise increases with increasing $\alpha$, but is $0.04$ to $0.08$ lower than $\alpha$.

In all cases considered, the correct probability of the proposed credible interval $\Set{C}^{(T^*)}$ is higher than the credible probability $\alpha$, confirming the reliability of Protocol~\ref{alg:fidelityest_int} in the presence of general noise.

\begin{figure}[t] \centering
\includegraphics[scale=1.02]{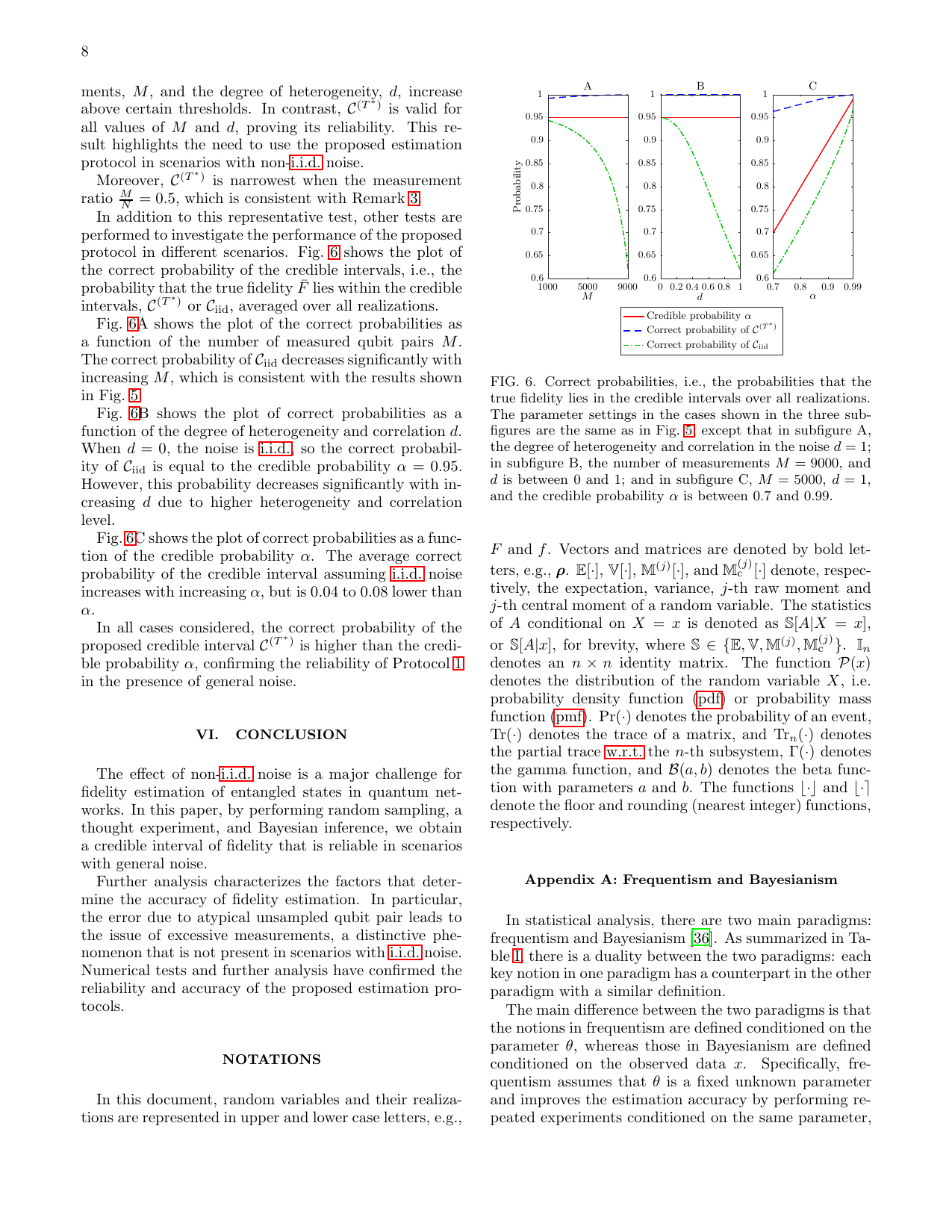}
\caption {Correct probabilities, i.e., the probabilities that the true fidelity lies in the credible intervals over all realizations. 
The parameter settings in the cases shown in the three subfigures are the same as in Fig.~\ref{fig_True_VS_Int_M}, 
except that in subfigure A, the degree of heterogeneity and correlation in the noise $d =1$;
in subfigure B, the number of measurements $M=9000$, and  $d$ is between $0$ and $1$; and
in subfigure C, $M=5000$, $d =1$, and the credible probability $\alpha$ is between $0.7$ and $0.99$.
}
\label{fig_True_VS_Int_Joint}
\end{figure}

\section{Conclusion}
\label{sec:conclusion}
The effect of non-\ac{i.i.d.} noise is a major challenge for fidelity estimation of entangled states in quantum networks.
In this paper, by performing random sampling, a thought experiment, and Bayesian inference,  we obtain a credible interval of fidelity that is reliable in scenarios with general noise.

Further analysis characterizes the factors that determine the accuracy of fidelity estimation.
In particular, the error due to atypical unsampled qubit pair leads to the issue of excessive measurements, a distinctive phenomenon that is not present in scenarios with \ac{i.i.d.} noise.
Numerical tests and further analysis have confirmed the reliability and accuracy of the proposed estimation protocols.

\appendix
\section*{Notations}
In this document, random variables and their realizations are represented in upper and lower case letters, e.g., $F$ and $f$. 
Vectors and matrices are denoted by bold letters, e.g., $\V{\rho}$. 
$\mathbb{E}[\cdot]$, $\mathbb{V}[\cdot]$, $\mathbb{M}^{(j)}[\cdot]$, and $\mathbb{M}_{\mathrm{c}}^{(j)}[\cdot]$ denote, respectively, the expectation, variance, $j$-th raw moment and $j$-th central moment of a random variable.
The statistics of $A$ conditional on $X=x$ is denoted as
$\mathbb{S}[A|X=x]$, or $\mathbb{S}[A|x]$, for brevity, where $\mathbb{S}\in\{\mathbb{E}, \mathbb{V}, \mathbb{M}^{(j)}, \mathbb{M}_{\mathrm{c}}^{(j)}\}$.
$\mathbb{I}_n$ denotes an $n\times n$ identity matrix.
The function $\Set{P}(x)$ denotes the distribution of the random variable $X$, i.e. \ac{pdf} or \ac{pmf}.
$\mathrm{Pr}(\cdot)$ denotes the probability of an event,
$\mathrm{Tr}(\cdot)$ denotes the trace of a matrix, and $\mathrm{Tr}_{n}(\cdot)$ denotes the partial trace \ac{w.r.t.} the $n$-th subsystem, 
$\Gamma(\cdot)$ denotes the gamma function,
and
$\mathcal{B}(a,b)$ denotes the beta function with parameters $a$ and $b$.
The functions $\lfloor\cdot \rfloor$ and $\lfloor\cdot \rceil$ denote the floor and rounding (nearest integer) functions, respectively.
\label{sec:symbol}

\section{Frequentism and Bayesianism}
\label{sec:method}
In statistical analysis, there are two main paradigms: frequentism and Bayesianism \cite{Efr:J05}.
As summarized  in Table~\ref{tab:dual}, there is a duality between the two paradigms: each key notion in one paradigm has a counterpart in the other paradigm with a similar definition.

\begin{table*}[t]
\renewcommand{\arraystretch}{1.2}
\caption{Key notions in Frequentism and Bayesianism$^*$\label{tab:dual}}
\centerline{
\begin{threeparttable}
\begin{tabular}{c|c"c|c|c|c}
\hlinew{1pt}
\multicolumn{2}{c"}{} & Point estimator & Estimation error & Strength of evidence  & Interval estimator   \\
\hline
%\multirow{3}{*}{\hspace{-2mm}{\parbox[c]{1.5cm}{\centering Distillation operations}}\hspace{-1mm}}
\multirow{4}{*}{\hspace{-1mm}{\parbox[c]{2.3cm}{\centering Frequentism}}\hspace{-1mm}}
&\Clb &
\Clb Unbiased &\Clb Variance of &\Clb  &\Clb Confidence\\
&\Clb  \multirow{-2}{*}{\hspace{-1mm}{\parbox[c]{2.3cm}{\centering Terminology}}\hspace{-1mm}}& 
\Clb estimator&\Clb the estimator&\Clb \multirow{-2}{*}{\hspace{-1mm}{\parbox[c]{2.3cm}{\centering $p$-value}}\hspace{-1mm}}                &\Clb interval\\\cline{2-6}
& \multirow{2}{*}{\hspace{-1mm}{\parbox[c]{2.3cm}{\centering Definition}}\hspace{-1mm}}
& $\tilde{\theta}(x)\mbox{ such that}$ 
& \multirow{2}{*}{\hspace{-1mm}{\parbox[c]{2.7cm}{\centering $\mathbb{E}\big[(\tilde{\theta}(X)-\theta)^2 \big|\theta \big]$}}\hspace{-1mm}} &
\multirow{2}{*}{\hspace{-1mm}{\parbox[c]{3.5cm}{\centering $\Pr\big[X\in{\Set S}(x) \big|H(\theta)\big]$}}\hspace{-1mm}}
& ${\Set C}(x)$ such that \\
&& $\mathbb{E}\big[\tilde{\theta}(X) \big|\theta \big]=\theta$ &  
& & $\Pr\big[\theta\in  {\Set C}(X) \big|\theta\big]\geq \alpha$ \\\hline
\multirow{4}{*}{\hspace{-1mm}{\parbox[c]{2.3cm}{\centering Bayesianism}}\hspace{-1mm}}     
&\Clb &
\Clb Bayes &\Clb Mean square &\Clb  Posterior probability &\Clb Credible \\
&\Clb \multirow{-2}{*}{\hspace{-1mm}{\parbox[c]{2.3cm}{\centering Terminology}}\hspace{-1mm}} &
\Clb estimator  &\Clb error &\Clb  of hypothesis &\Clb interval \\\cline{2-6}
&\multirow{2}{*}{\hspace{-1mm}{\parbox[c]{2.3cm}{\centering Definition}}\hspace{-1mm}} & $\tilde{\theta}(x)\mbox{ such that}$ 
& \multirow{2}{*}{\hspace{0mm}{\parbox[c]{2.9cm}{\centering $\mathbb{E}\big[(\tilde{\theta}(x)-\Theta)^2 \big| x \big]$}}\hspace{-0.5mm}}
& \multirow{2}{*}{\hspace{-1mm}{\parbox[c]{3cm}{\centering $\Pr\big[ H(\Theta)\big| x\big]$ }}\hspace{-1mm}}
 &  ${\Set C}(x)$ such that\\
& & $\mathbb{E}\big[\Theta \big| x \big]=\tilde{\theta}(x)$ & &  
& $\Pr\big[\Theta\in {\Set C}(x) \big| x\big]\geq \alpha$ \\
\hlinew{1pt}
\end{tabular}
\begin{tablenotes}\footnotesize
\item[*] $\theta$ is the parameter to be estimated, $x$ is the observed data point, $\tilde{\theta}(x)$ is an estimator of $\theta$ based on $x$, $H(\theta)$ is a hypothesis of $\theta$, and ${\Set S}(x)$ and ${\Set C}(x)$ are two intervals determined by $x$. Variables are indicated in upper or lower case letters depending on whether they are regarded as random or deterministic in the context.
\end{tablenotes}
\end{threeparttable}
}
\end{table*}

The main difference between the two paradigms is that the notions in frequentism are defined conditioned on the parameter $\theta$, whereas those in Bayesianism are defined conditioned on the observed data $x$.
Specifically, frequentism assumes that $\theta$ is a fixed unknown parameter and improves the estimation accuracy by performing repeated experiments conditioned on the same parameter, while
Bayesianism treats $\theta$ as a random variable and refines the distribution of $\theta$ conditioned on the observed data $x$
\cite{BerWol:B88,BerSmi:B00,Wagetal:J08}.

For fidelity estimation in scenarios with non-\ac{i.i.d.} noise, the estimation target, i.e., the average fidelity of the unsampled qubit pairs $\bar{f}$, changes as a function of both the sampling set $\Set M$ and the measurement outcome $\V{r}$.
Consequently, measurements on different qubit pairs are not repeated experiments conditioned on the same parameter.
In this respect, the Bayesian paradigm is more suitable for making estimates for fidelity estimation in scenarios with non-\ac{i.i.d.} noise. 
Therefore, the credible interval, instead of the confidence interval, is used as the interval estimate.

\section{Derivation of the Credible Interval}
\label{sec:Bayesian}
Consider a thought experiment that measures all qubit pairs.
The goal is to introduce the \ac{QBER} of unsampled qubit pairs $\mathcal{E}_{\SU }$ and use it to establish a relationship between the measured \ac{QBER} $\varepsilon_{\SM}$ and the tail probability of the average fidelity $\bar{F}$ of the unsampled qubit pairs.
Accordingly, the interval estimation consists of two main steps, i.e., characterizing the moments of the posterior distribution of $\mathcal{E}_{\SU }$,
and bounding the tail probability of $\bar{F}$ based on the moments characterized in the previous step.

\subsection{Posterior distribution of $\mathcal{E}_{\SU }$}
\label{pf_lem:unsampled}
In this subsection, we derive the moments of the posterior distribution of  $\mathcal{E}_{\SU }$, the \ac{QBER} of the unsampled qubit pairs. 

Let us first consider the following sampling problem:
$b_n, n\in \Set N$ are binary variables, with $|\Set{N}|=N$, and there is no prior information about the sum of all variables $s_{\Set N} =\sum_{n\in \Set N}b_n$.
The sample set $\Set{M} \subseteq \Set N$, $|\Set{M}|=M$ is drawn completely at random.
Denote the sum of the sampled and unsampled variables as
\begin{align}
&S_{\SM}=\sum_{n\in\SM} b_n,& S_{\SU}=\sum_{n\in\SU}b_n,
\end{align}
respectively, where the unsampled set $\SU = \Set N \backslash \SM$.
The following lemma characterizes the posterior distribution of $S_{\SU }$.

\begin{LemA}[Posterior distribution of the unsampled variables]\thlabel{lem:unsampled}
Given a realization of the sample result $S_{\SM}=s_{\SM}$,
the posterior distribution of $S_{\SU}$ is given by
\begin{align}
&{\Set P}(s_{\SU }| s_{\SM})\nonumber\\
&={N-M\choose s_{\SU }}\frac{\mathcal{B}(s_{\SU }+s_{\SM}+\frac{1}{2},N-s_{\SU }- s_{\SM}+\frac{1}{2})}{\mathcal{B}(s_{\SM}+\frac{1}{2},M- s_{\SM}+\frac{1}{2})},\label{eqn:pKbar}
\end{align}
where $\mathcal{B}(a,b)$ is the beta function with parameters $a$ and $b$.
\end{LemA}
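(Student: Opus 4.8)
The plan is to recognize \eqref{eqn:pKbar} as the posterior predictive distribution of a Beta--Binomial model and to obtain it by a standard conjugate computation. The phrase ``no prior information about the sum'' I would formalize by introducing a latent success probability $\theta\in[0,1]$, assigning it the Jeffreys (reference) prior $\theta\sim\mathrm{Beta}(\tfrac12,\tfrac12)$, and modeling the $b_n$, $n\in\Set{N}$, as conditionally i.i.d.\ $\mathrm{Bernoulli}(\theta)$ given $\theta$. This is the canonical noninformative model for binary data, and it is precisely the source of the $\tfrac12$ offsets appearing throughout the protocol; indeed, one can check that the induced posterior mean of $\theta$ equals $\frac{s_{\SM}+1/2}{M+1}$, which is the quantity driving the center point $\tilde{f}$ in \eqref{eqn:tildeF}. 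Equivalently, one may work directly with the finite population, sampling $M$ of $N$ fixed binary values without replacement, placing the $\mathrm{Beta}$--$\mathrm{Binomial}(N,\tfrac12,\tfrac12)$ prior on the unknown total $s_{\Set{N}}$, and using the hypergeometric sampling law; this yields the same predictive, since that prior is exactly the finite de~Finetti mixture induced by the model above.

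With the model fixed, I would carry out three steps. First, because $\SM$ and $\SU=\Set{N}\backslash\SM$ are disjoint, conditionally on $\theta$ the partial sums are independent with $S_{\SM}\sim\mathrm{Bin}(M,\theta)$ and $S_{\SU}\sim\mathrm{Bin}(N-M,\theta)$. Second, by Beta--Binomial conjugacy the posterior of $\theta$ given the observed $S_{\SM}=s_{\SM}$ is
\begin{align}
\theta\,|\,s_{\SM}\sim\mathrm{Beta}\big(s_{\SM}+\tfrac12,\ M-s_{\SM}+\tfrac12\big),\nonumber
\end{align}
whose density carries the normalizing constant $\mathcal{B}(s_{\SM}+\tfrac12,M-s_{\SM}+\tfrac12)$ that appears in the denominator of \eqref{eqn:pKbar}. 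Third, I would marginalize $\theta$ out of the predictive distribution of $S_{\SU}$:
\begin{align}
{\Set P}(s_{\SU}\,|\,s_{\SM})=\int_0^1 \binom{N-M}{s_{\SU}}\theta^{s_{\SU}}(1-\theta)^{N-M-s_{\SU}}\,{\Set P}(\theta\,|\,s_{\SM})\,d\theta.\nonumber
\end{align}

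Substituting the Beta density and collecting powers of $\theta$ and $1-\theta$ turns the integrand into $\theta^{a-1}(1-\theta)^{b-1}$ with $a=s_{\SU}+s_{\SM}+\tfrac12$ and $b=(N-M-s_{\SU})+(M-s_{\SM})+\tfrac12=N-s_{\SU}-s_{\SM}+\tfrac12$, so the integral evaluates to $\mathcal{B}(a,b)$ by the definition of the beta function. Dividing by the posterior normalizer then reproduces \eqref{eqn:pKbar} verbatim. The computation itself is entirely mechanical; the only step that requires genuine justification is the first paragraph, namely arguing that ``no prior information'' is correctly encoded by the Jeffreys prior $\mathrm{Beta}(\tfrac12,\tfrac12)$ rather than, say, a flat prior on $s_{\Set{N}}$ (which would produce $+1$ offsets instead of $+\tfrac12$), and that the conditionally-i.i.d.\ representation faithfully models sampling without replacement from a fixed but unknown population. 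Once that modeling choice is fixed, matching the two beta factors to the claimed expression is immediate.
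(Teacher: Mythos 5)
Your proposal is correct, and although it rests on exactly the same statistical model as the paper, it follows a genuinely different computational route. Both arguments encode ``no prior information'' identically: a latent success probability with Jeffreys prior $\mathrm{Beta}(\tfrac12,\tfrac12)$ and binary data that are conditionally i.i.d.\ given that parameter (the paper phrases this as a hierarchical beta-binomial prior on $S_{\Set N}$ followed by hypergeometric sampling of $S_{\SM}$; as you observe, exchangeability makes the two formulations induce the same joint law). The difference is the order of marginalization. The paper integrates $\theta$ out \emph{first}, computing beta-binomial priors for $S_{\Set N}$ in \eqref{eqn:pKM_prior_2b} and for $S_{\SM}$ in \eqref{eqn:psm} --- a step that costs a Vandermonde-type rearrangement of binomial coefficients plus the normalization trick \eqref{eqn:sumeq1} --- and only then applies Bayes' theorem to the discrete pair, obtaining the posterior of $S_{\Set N}$ in \eqref{eqn:pSN_SM} and finally substituting $S_{\SU}=S_{\Set N}-S_{\SM}$. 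You instead apply Bayes' theorem to $\theta$ first (conjugacy), and exploit the conditional independence of $S_{\SM}$ and $S_{\SU}$ given $\theta$ to write ${\Set P}(s_{\SU}\,|\,s_{\SM})$ as a posterior-predictive beta integral, which evaluates to \eqref{eqn:pKbar} in one line. Your route is shorter, dispenses with the combinatorial identities entirely, and makes it structurally transparent that the answer must be a beta-binomial (it is a conjugate posterior predictive); the paper's route, in exchange, stays within the explicit finite-population picture of the thought experiment (fixed outcomes, random subset) and delivers the posterior of the total $S_{\Set N}$ as a byproduct. One point you should state explicitly in your first step: the conditional independence of the two partial sums given $\theta$ uses not only the disjointness of $\SM$ and $\SU$ but also the fact that the sample set is drawn independently of the values $b_n$; since the conditional law is the same product of binomials for every realization of $\SM$, averaging over the random sampling preserves it.
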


\begin{proof}
There is no prior information about the sum of the variables, $s_{\Set N}$.
In the Bayesian paradigm, when no prior information is available, a noninformative prior must be selected to reflect a balance among possible parameter values.
Several types of noninformative priors have been proposed \cite{BerBerSun:J09,StaRei:J10,Kas:J90}.
The inference of the sum of the unsampled variables is a one-dimensional problem, in which the noninformative priors proposed in \cite{BerBerSun:J09,StaRei:J10,Kas:J90} all converge to the Jeffreys prior \cite{Syv:98}. 
Therefore, the Jeffreys prior is considered to be the optimal noninformative prior for one-dimensional inference problems \cite{BerBerJosDon:J15}.
Following the literature, the Jeffreys prior is adopted.

Since $S_{\Set N}$ is discrete, its noninformative prior cannot be directly obtained via the Jeffreys prior.
To solve this problem, we adopt a hierarchical setting \cite[p.339]{BerSmi:B00} to generate the noninformative prior of $S_{\Set N}$.
Specifically, $S_{\Set N}$ is drawn from a binomial distribution with mean $\Theta$ and $N$ number of trials, where the distribution of $\Theta$ is given by the Jeffreys prior, i.e., 
\begin{align}
{\Set P}(\theta)=\frac{\theta^{-\frac{1}{2}}(1-\theta)^{-\frac{1}{2}}}{\mathcal{B}(\frac{1}{2},\frac{1}{2})}.
\end{align}
With this hierarchical setting, the noninformative prior of $S_{\Set N}$ is specified by \cite[eqn.(8)]{BerBerSun:J12}:
\begin{subequations}
\begin{align}
{\Set P}(s_{\Set N}) &= \int_0^1{\Set P}(s_{\Set N}|\theta){\Set P}(\theta)d\theta\nonumber\\
&=\int_0^1{N\choose s_{\Set N}}
\theta^{s_{\Set N}}(1-\theta)^{N- s_{\Set N}}
\frac{\theta^{-\frac{1}{2}}(1-\theta)^{-\frac{1}{2}}}{\mathcal{B}(\frac{1}{2},\frac{1}{2})} d\theta\nonumber\\
&=\frac{1}{\pi}{N\choose s_{\Set N}}
 \int_0^1 \theta^{s_{\Set N}-\frac{1}{2}}(1-\theta)^{N- s_{\Set N}-\frac{1}{2}} d\theta
 \label{eqn:pKM_prior_2a}
 \\
&=\frac{1}{\pi}\frac{\Gamma(s_{\Set N} +\frac{1}{2})\Gamma(N-s_{\Set N}+\frac{1}{2})}{\Gamma(s_{\Set N} +1)\Gamma(N-s_{\Set N}+1)},\label{eqn:pKM_prior_2b}
\end{align}
\end{subequations}
where $\Gamma(\cdot)$ is the gamma function, and \eqref{eqn:pKM_prior_2b} is obtained by substituting the following two equations.
\begin{align}
{x\choose y}=\frac{\Gamma(x+1)}{\Gamma(y+1)\Gamma(x-y+1)},\quad \forall x, y\in\mathbb{C},  \label{eqn:comb_gamma}
\end{align}
and
\begin{align}
&\mathcal{B}(x,y) = \int_0^1 \theta^{x-1}(1-\theta)^{y-1} d\theta
=\frac{\Gamma(x)\Gamma(y)}{\Gamma(x+y)},\nonumber\\
& \forall x,y >0.
  \label{eqn:beta_gamma}
\end{align}

Denote the number of unsampled variables by $U = N-M$. 
Given the sum of all variables, $S_{\Set N}$, the sum of sampled variables $S_{\SM}$ follows a hypergeometric distribution.
Therefore, from \eqref{eqn:pKM_prior_2a}, the prior distribution of $S_{\SM}$ can be derived as follows.
\begin{subequations}
\begin{align}
&{\Set P}(s_{\SM})\nonumber\\
&= \sum_{s_{\Set N} =0}^{N} {\Set P}(s_{\SM}|s_{\Set N}){\Set P}(s_{\Set N})\nonumber\\
&=  \sum_{s_{\Set N} =0}^{N}
\frac{{s_{\Set N}\choose s_{\SM}}
{N-s_{\Set N}\choose M-s_{\SM}} }
{ {N\choose M} }\frac{1}{\pi}{N\choose s_{\Set N} }\nonumber
 \\&
 \hspace{4.5mm}\int_0^1 \theta^{s_{\Set N}-\frac{1}{2}}(1-\theta)^{N- s_{\Set N}-\frac{1}{2}} d\theta\nonumber\\
 &=  \frac{1}{\pi} \int_0^1\sum_{s_{\Set N} =s_{\SM}}^{N-M+s_{\SM}}
\frac{{s_{\Set N}\choose s_{\SM}}
{N-s_{\Set N}\choose M-s_{\SM} }
{N\choose s_{\Set N}}}
{{N\choose M}}\nonumber\\ 
&\hspace{4.5mm}\cdot
 \theta^{s_{\Set N}-\frac{1}{2}}(1-\theta)^{N- s_{\Set N}-\frac{1}{2}} d\theta \label{eqn:psm_a}
 \end{align}
 \begin{align}
 &= \frac{1}{\pi} \int_0^1\sum_{s_{\Set N} =s_{\SM}}^{N-M+s_{\SM}}
{M\choose s_{\SM}}
{N-M\choose s_{\Set N}- s_{\SM}}
\nonumber\\ 
&\hspace{4.5mm}\cdot
 \theta^{s_{\Set N}-\frac{1}{2}}(1-\theta)^{N- s_{\Set N}-\frac{1}{2}} d\theta \label{eqn:psm_b}\\
 &= \frac{1}{\pi}{M\choose s_{\SM}} \int_0^1
\theta^{s_{\SM}-\frac{1}{2}}(1-\theta)^{M- s_{\SM}-\frac{1}{2}}
\nonumber\\ 
&\hspace{4.5mm}\cdot
\sum_{s_{\SU} =0}^{U}
{U\choose s_{\SU}}\,
 \theta^{s_{\SU}}(1-\theta)^{U -s_{\SU}} d\theta  \label{eqn:psm_c} \\
 &= \frac{1}{\pi} {M\choose s_{\SM}}\int_0^1
\theta^{s_{\SM}-\frac{1}{2}}(1-\theta)^{M- s_{\SM}-\frac{1}{2}}d\theta  \label{eqn:psm_d}\\
&=\frac{1}{\pi}
\frac{\Gamma(s_{\SM}+\frac{1}{2})\Gamma(M- s_{\SM}+\frac{1}{2})}
{\Gamma(s_{\SM}+1)\Gamma(M- s_{\SM}+1)},\label{eqn:psm_e}
\end{align}
\label{eqn:psm}
\end{subequations}
where
\eqref{eqn:psm_a} is obtained by applying Fubini's theorem and the fact that ${A \choose B} =0$ if $B>A$,
\eqref{eqn:psm_b} is obtained by applying \eqref{eqn:comb_gamma},
\eqref{eqn:psm_d} holds because that $\theta^{s_{\SU}}(1-\theta)^{U -s_{\SU}}$ is the \ac{pmf} of a binomial distribution with mean $\theta$ and $U$ trials, therefore summing over all possible values of $s_{\SU}$ equals $1$, i.e.,
\begin{align}
\sum_{s_{\SU} =0}^{U}
{U\choose s_{\SU}}\,
 \theta^{s_{\SU}}(1-\theta)^{U -s_{\SU}}=1,\label{eqn:sumeq1}
 \end{align}
and \eqref{eqn:psm_e} is obtained by applying  \eqref{eqn:comb_gamma} and \eqref{eqn:beta_gamma}.

Using \eqref{eqn:pKM_prior_2b} and \eqref{eqn:psm}, the posterior distribution of $S_{\Set N}$ is derived by applying Bayes' theorem.
\begin{subequations} 
\begin{align}
&{\Set P}(s_{\Set N}|s_{\SM})\nonumber\\
&=\frac{{\Set P}(s_{\SM}|s_{\Set N}){\Set P}(s_{\Set N})}{{\Set P}(s_{\SM})}\nonumber\\
&=\frac{{s_{\Set N}\choose s_{\SM}}
{N-s_{\Set N}\choose M-s_{\SM}} }
{ {N\choose M} }\frac{\Gamma(s_{\Set N} +\frac{1}{2})\Gamma(N-s_{\Set N}+\frac{1}{2})}{\Gamma(s_{\Set N} +1)\Gamma(N-s_{\Set N}+1)}\nonumber\\
&\hspace{4.5mm}\cdot \frac{\Gamma(s_{\SM}+1)\Gamma(M- s_{\SM}+1)}
{\Gamma(s_{\SM}+\frac{1}{2})\Gamma(M- s_{\SM}+\frac{1}{2})}\nonumber\\
&=
{ N-M\choose s_{\Set N} - s_{\SM} }
\frac{\Gamma(s_{\Set N} +\frac{1}{2})\Gamma(N-s_{\Set N} +\frac{1}{2})}{\Gamma(N+1)}\nonumber\\
&\hspace{4.5mm}\cdot
\frac{\Gamma(M+1)}{\Gamma(s_{\SM} +\frac{1}{2})\Gamma(M-s_{\SM} +\frac{1}{2})}\label{eqn:pSN_SM_a}\\
&=
{ N-M\choose s_{\Set N} - s_{\SM} }
\frac{\mathcal{B}(s_{\Set N}+\frac{1}{2},N- s_{\Set N}+\frac{1}{2})}{\mathcal{B}(s_{\SM}+\frac{1}{2},M- s_{\SM}+\frac{1}{2})},
\label{eqn:pSN_SM_b}
\end{align} \label{eqn:pSN_SM}
\end{subequations} 
 where \eqref{eqn:pSN_SM_a}  is obtained by applying \eqref{eqn:comb_gamma}, and \eqref{eqn:pSN_SM_b} is obtained by applying \eqref{eqn:beta_gamma}.
 
According to \eqref{eqn:pSN_SM}, the posterior distribution of the sum of unsampled variables $S_{\SU }=S_{\Set N} - S_{\SM}$ is given by \eqref{eqn:pKbar}.
This completes the proof of \thref{lem:unsampled}.
 \end{proof}

The following lemma specifies the moments of the \ac{QBER} of unsampled qubit pairs, ${\Set E}_{\SU }$, based on the thought experiment and \thref{lem:unsampled}.
 
 \begin{LemA}[Moments of the posterior distribution of ${\Set E}_{\SU }$]\thlabel{lem:moments}
Given a realization of the \ac{QBER} of sampled qubit pairs, ${\Set E}_{\SM}=\varepsilon_{\SM}$,
the expected value of the \ac{QBER} of unsampled qubit pairs, ${\Set E}_{\SU }$, can be expressed as
\begin{align}
\varepsilon_{\mathrm E}=\mathbb{E}\big[{\Set E}_{\SU }\big| \varepsilon_{\SM}\big]=\varepsilon_{\SM} + \frac{\frac{1}{2}-\varepsilon_{\SM}}{M+1}. \label{eqn:kmexp_ap}
\end{align}
Moreover, the $t$-th central moment of the posterior distribution of ${\Set E}_{\SU }$, $t\in \mathbb{Z}^+$, is given by
\begin{align}
\mathbb{M}_{\mathrm{c}}^{(t)}\big[{\Set E}_{\SU }\big| \varepsilon_{\SM}\big]= 
\sum_{j=0}^{t}\frac{(-1)^j}{(N-M)^{(t-j)}}\bigg(\begin{array}{c}t\\j\end{array}\bigg)\big(\varepsilon_{\mathrm E}\big)^j\nonumber\\
\cdot\mathbb{M}^{(t-j)}\big[S_{\SU }\big| s_{\SM}\big],\label{eqn:MCkbar_ap}
\end{align}
where $S_{\SM}$ and $S_{\SU}$ denote the sum of the measurement outcome $r_n$ of the sampled $n\in \SM$ and unsampled $n\in \Set \SU$ qubit pairs, respectively,
$\mathbb{M}^{(j)}\big[S_{\SU }\big| s_{\SM}\big]$ is the $j$-th raw moment of the posterior distribution of $S_{\SU }$, given by
\begin{align}
&\mathbb{M}^{(j)}\big[S_{\SU }\big| s_{\SM}\big]=
\sum_{l=0}^{j}\bigg\{\begin{array}{c}j\\l\end{array}\bigg\}(N-M)_{(l)}\nonumber\\
&\hspace{15mm}\cdot\frac{\mathcal{B}\big(l+(M+1)\varepsilon_{\mathrm E},(M+1)(1-\varepsilon_{\mathrm E})\big)}{\mathcal{B}\big((M+1)\varepsilon_{\mathrm E},(M+1)(1-\varepsilon_{\mathrm E})\big)},\label{eqn:MR_ap}
\end{align}
in which $\bigg\{\begin{array}{c}j\\l\end{array}\bigg\}$ and $n_{(l)}$ denote the Stirling numbers of the second kind and the descending factorial function, respectively.
In particular, the variance (second central moment) of $\mathcal{E}_{\SU }$ is given by
\begin{align}
\mathbb{V}\big[{\Set E}_{\SU } | \varepsilon_{\SM}\big]&=\frac{N+1}{(N-M)(M+2)}\varepsilon_{\mathrm E}(1-\varepsilon_{\mathrm E}).\label{eqn:kmvar_ap}
\end{align}
\end{LemA}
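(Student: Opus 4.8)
The key observation is that the posterior distribution of $S_{\SU }$ obtained in \thref{lem:unsampled}, i.e., \eqref{eqn:pKbar}, is exactly a beta-binomial distribution: it describes $N-M$ Bernoulli trials whose common success probability $p$ is drawn from a $\mathrm{Beta}(s_{\SM}+\tfrac{1}{2},\,M-s_{\SM}+\tfrac{1}{2})$ prior. The shape parameters therefore satisfy $\alpha = s_{\SM}+\tfrac{1}{2}$, $\beta = M-s_{\SM}+\tfrac{1}{2}$, and $\alpha+\beta = M+1$. Since the unsampled \ac{QBER} is ${\Set E}_{\SU } = S_{\SU }/(N-M)$, every claimed moment formula will follow from the standard moments of this mixture together with linearity of expectation.

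First I would prove \eqref{eqn:kmexp_ap}. The beta-binomial mean gives $\mathbb{E}[S_{\SU }\,|\,s_{\SM}] = (N-M)\,\alpha/(\alpha+\beta)$, so dividing by $N-M$ yields $\mathbb{E}[{\Set E}_{\SU }\,|\,\varepsilon_{\SM}] = (s_{\SM}+\tfrac{1}{2})/(M+1)$. Substituting $s_{\SM}=M\varepsilon_{\SM}$ and rearranging produces the stated form $\varepsilon_{\mathrm E}=\varepsilon_{\SM}+(\tfrac{1}{2}-\varepsilon_{\SM})/(M+1)$. This computation also delivers the crucial reparameterization $\alpha=(M+1)\varepsilon_{\mathrm E}$ and $\beta=(M+1)(1-\varepsilon_{\mathrm E})$, which is exactly the form in which the beta-function ratios appear in \eqref{eqn:MR_ap}.

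Next I would establish the raw moments \eqref{eqn:MR_ap}. Conditioning on $p$, the descending-factorial moments of a binomial are elementary, so by the tower property $\mathbb{E}[(S_{\SU })_{(l)}] = (N-M)_{(l)}\,\mathbb{E}[p^{\,l}] = (N-M)_{(l)}\,\mathcal{B}(\alpha+l,\beta)/\mathcal{B}(\alpha,\beta)$, where the beta moment $\mathbb{E}[p^{\,l}]$ follows directly from the definition of $\mathcal{B}$. Converting these descending-factorial moments into the raw moments $\mathbb{M}^{(j)}[S_{\SU }\,|\,s_{\SM}]$ through the classical expansion of integer powers as Stirling-weighted sums of descending factorials, and inserting the reparameterization above, yields \eqref{eqn:MR_ap}. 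The central-moment identity \eqref{eqn:MCkbar_ap} then follows purely algebraically: expanding $({\Set E}_{\SU }-\varepsilon_{\mathrm E})^t$ by the binomial theorem, replacing each ${\Set E}_{\SU }^{\,t-j}$ by $S_{\SU }^{\,t-j}/(N-M)^{t-j}$, and taking expectations term by term. Finally, \eqref{eqn:kmvar_ap} is the case $t=2$ (equivalently, the standard beta-binomial variance $n\alpha\beta(\alpha+\beta+n)/[(\alpha+\beta)^2(\alpha+\beta+1)]$ divided by $(N-M)^2$), which collapses to the stated expression once $\alpha+\beta=M+1$ and $\alpha+\beta+n=N+1$ are substituted.

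I expect the only real difficulty to be bookkeeping rather than analysis: one must read off the beta-binomial parameters from the nonstandard-looking form of \eqref{eqn:pKbar} and check that the reparameterization $\alpha=(M+1)\varepsilon_{\mathrm E}$ is applied consistently across the mean, the factorial moments, and the variance. Once the mixture representation is identified, the factorial-moment route sidesteps any direct manipulation of beta-function ratios, so no genuinely hard step remains.
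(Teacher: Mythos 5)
Your moment computations are correct and follow essentially the same route as the paper: identify the posterior of $S_{\SU}$ as a beta-binomial distribution with parameters $N-M$, $s_{\SM}+\tfrac{1}{2}$, $M-s_{\SM}+\tfrac{1}{2}$, obtain its raw moments as Stirling-weighted sums of descending-factorial moments, convert raw moments to central moments by binomial expansion, and rescale by $N-M$. The one difference is that you derive the Stirling-number formula \eqref{eqn:MR_ap} from first principles (conditioning on the mixing probability $p$ and using $\mathbb{E}\big[(S_{\SU})_{(l)}\,\big|\,p\big]=(N-M)_{(l)}\,p^l$ together with the beta moments of $p$), whereas the paper cites a known result for beta-binomial moments and then specializes it. Your derivation is self-contained and equally valid, and your reparameterization $\alpha=(M+1)\varepsilon_{\mathrm E}$, $\beta=(M+1)(1-\varepsilon_{\mathrm E})$ is exactly the identity the paper uses to pass from $s_{\SM}+\tfrac{1}{2}$, $M-s_{\SM}+\tfrac{1}{2}$ to the form appearing in \eqref{eqn:MR_ap}; the variance check via $\alpha+\beta=M+1$ and $\alpha+\beta+(N-M)=N+1$ is likewise correct.

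There is, however, one genuine gap at the very first step: you apply \thref{lem:unsampled} directly to the quantum measurement outcomes, but that lemma is a statement about \emph{classical} binary variables sampled completely at random with no prior information on their sum. The lemma you are proving concerns a thought experiment in which the unsampled qubit pairs are measured \emph{after} the sampled ones, and under general noise the joint state $\V{\rho}_{\mathrm{all}}$ of all pairs may be arbitrarily correlated or entangled across the sampled/unsampled partition. The step the paper supplies, and that your proposal omits, is the commutativity argument: measurements on distinct qubit pairs commute, so the thought experiment can be reordered to measure \emph{all} pairs first and perform the random sampling afterwards; only after this reordering do the outcomes $r_n\in\{0,1\}$, $n\in\Set N$, become a fixed collection of classical binary values to which the classical sampling lemma applies, irrespective of the statistical properties of the noise. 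Without this argument, the identification of ${\Set P}(s_{\SU}\,|\,s_{\SM})$ with \eqref{eqn:pKbar} is unjustified, since one cannot otherwise exclude that quantum correlations between sampled and unsampled pairs alter the conditional law of $S_{\SU}$ given the sampled outcomes. The step is short, but it is precisely where general quantum noise is reduced to a classical sampling problem, so it cannot be skipped.
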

\begin{proof}
The measurements on different qubit pairs are commutative. 
Thus, the order in which the measurements are made does not change the results of the measurements.
Therefore, one can perform the thought experiment in the following order:
\begin{enumerate}[leftmargin=*]
\item \emph{Measure all qubit pairs:} The nodes measure all qubit pairs and record the measurement outcomes $r_n\in\{0,1\}$, $n\in\Set{N}$.
\item \emph{Random sampling:} The nodes choose a subset of qubit pairs, $\SM$, and compute the \ac{QBER} of the sampled pairs, i.e., $\varepsilon_{\SM}= \frac{\sum_{n\in \SM} r_{n}}{M}$.
\item \emph{Post processing:} Given $\varepsilon_{\SM}$, the nodes derive the posterior distribution $\Set{P}(\varepsilon_{\SU }|\varepsilon_{\SM})$ and characterize its statistical properties.
\end{enumerate}
In this order, the estimation process after step 1 deals with a set of classical binary variables, i.e., $r_n\in\{0,1\}$, $n\in\Set{N}$.
Since there is no prior information about the state of the qubit pairs, there is no prior information about the sum of the measurement outcome, $s_{\Set N} = \sum_{n\in \Set N} r_n$.
Moreover, the sample set $\SM$ is drawn completely at random.
With all the preconditions of Lemma~\ref{lem:unsampled} satisfied, 
one can apply this lemma and obtain the posterior distribution of $S_{\SU }$:
\begin{align}
&{\Set P}(s_{\SU }| s_{\SM})\nonumber\\
&={N-M\choose s_{\SU }}\frac{\mathcal{B}(s_{\SU }+s_{\SM}+\frac{1}{2},N-s_{\SU }- s_{\SM}+\frac{1}{2})}{\mathcal{B}(s_{\SM}+\frac{1}{2},M- s_{\SM}+\frac{1}{2})}.\label{eqn:pKbar-2}
\end{align}

According to \eqref{eqn:pKbar-2}, the posterior distribution of $S_{\SU }$ is a beta-binomial distribution with parameters $N-M$, 
$s_{\SM}+\frac{1}{2}$, and $M- s_{\SM}+\frac{1}{2}$. 
Hence, according to \cite[eqn.(9)]{Oga:J16}, the $j$-th moment of the posterior distribution of $S_{\SU }$ is given by
\begin{align}
&\mathbb{M}^{(j)}\big[S_{\SU }\big| s_{\SM}\big]\nonumber\\
&=
\sum_{l=0}^{j}\bigg\{\begin{array}{c}j\\l\end{array}\bigg\}(N-M)_{(l)}
\frac{\mathcal{B}\big(l+s_{\SM}+\frac{1}{2}, M- s_{\SM}+\frac{1}{2}\big)}{\mathcal{B}\big(s_{\SM}+\frac{1}{2}, M- s_{\SM}+\frac{1}{2}\big)}
\nonumber\\
&=\sum_{l=0}^{j}\bigg\{\begin{array}{c}j\\l\end{array}\bigg\}(N-M)_{(l)}\nonumber
\\&\quad\cdot\frac{\mathcal{B}\big(l+(M+1)\varepsilon_{\mathrm E},(M+1)(1-\varepsilon_{\mathrm E})\big)}{\mathcal{B}\big((M+1)\varepsilon_{\mathrm E},(M+1)(1-\varepsilon_{\mathrm E})\big)},\label{eqn:Mraw_2}
\end{align}
which proves \eqref{eqn:MR_ap}.

In \eqref{eqn:Mraw_2}, by letting $j=1$, the posterior expectation of $S_{\SU }$ can be obtained as follows.
\begin{align}
&\mathbb{E}\big[S_{\SU }| s_{\SM}\big] \nonumber\\
&= 
(N-M)\frac{\mathcal{B}\big(1+(M+1)\varepsilon_{\mathrm E},(M+1)(1-\varepsilon_{\mathrm E})\big)}
{\mathcal{B}\big((M+1)\varepsilon_{\mathrm E},(M+1)(1-\varepsilon_{\mathrm E})\big)} \nonumber\\
&=(N-M)\frac{(M+1)\varepsilon_{\mathrm E}}{(M+1)\varepsilon_{\mathrm E}+(M+1)(1-\varepsilon_{\mathrm E})} \nonumber\\
&=(N-M)\varepsilon_{\mathrm E},\label{eqn:expKMbar}
\end{align}
where the second equality is obtained by exploiting the property of the beta function, i.e.,
\begin{align}
\mathcal{B}(\alpha+1, \beta) = \mathcal{B}(\alpha, \beta) \frac{\alpha}{\alpha+\beta}.\label{eqn:betaprop}
\end{align}

In \eqref{eqn:Mraw_2}, by setting $j=2$ and applying \eqref{eqn:betaprop}, one can obtain the second moment as follows.
\begin{align}
\mathbb{M}^{(2)}\big[S_{\SU }\big| s_{\SM}\big]&=(N-M)\varepsilon_{\mathrm E}\bigg(1+\nonumber\\
&\hspace{4.5mm} (N-M-1)\frac{(M+1)\varepsilon_{\mathrm E}+1}{M+2}\bigg).\label{eqn:M2KMbar}
\end{align}

According to \eqref{eqn:Mraw_2} and the relation between the raw and central moments \cite[eqn.(6)]{Oga:J16}, one can obtain the central moments of the posterior distribution of $S_{\SU }$.
\begin{align}
&\mathbb{M}_{\mathrm{c}}^{(t)}\big[S_{\SU }\big| s_{\SM}\big]\nonumber\\
&= \sum_{j=0}^{t}(-1)^j\bigg(\begin{array}{c}t\\j\end{array}\bigg)(N-M)^j\big(\varepsilon_{\mathrm E}\big)^j
\mathbb{M}^{(t-j)}\big[S_{\SU }\big| s_{\SM}\big].\label{eqn:MCK}
\end{align}
In \eqref{eqn:MCK}, by setting $t=2$, and substituting \eqref{eqn:expKMbar} and \eqref{eqn:M2KMbar} into the equation,  the  variance of the posterior distribution of $S_{\SU }$ is obtained as follows.
\begin{align}
\mathbb{V}\big[S_{\SU }| s_{\SM}\big]
&= 
\mathbb{M}^{(2)}\big[S_{\SU }\big| s_{\SM}\big] -2(N-M)\varepsilon_{\mathrm E}\nonumber\\
&\hspace{4.5mm} \cdot \mathbb{E}\big[S_{\SU }| s_{\SM}\big] + (N-M)^2\big(\varepsilon_{\mathrm E}\big)^2
\nonumber\\
&=\frac{(N-M)(N+1)}{M+2}\varepsilon_{\mathrm E}(1-\varepsilon_{\mathrm E}).\label{eqn:varKMbar}
\end{align}

Given the one-to-one correspondence between $\varepsilon_{\SM}$ and $s_{\SM}$,
\begin{align}
\mathbb{M}^{(t)}\big[{\Set E}_{\SU }\big| \varepsilon_{\SM}\big] = \frac{1}{(N-M)^t}\mathbb{M}^{(t)}\big[S_{\SU }\big| s_{\SM}\big].\label{eqn:Mscale}
\end{align}
By substituting \eqref{eqn:Mscale} into \eqref{eqn:expKMbar}, \eqref{eqn:MCK}, and \eqref{eqn:varKMbar}, equations
 \eqref{eqn:kmexp_ap}, \eqref{eqn:MCkbar_ap}, and \eqref{eqn:kmvar_ap} is obtained. 
 This completes the proof of \thref{lem:moments}.
\end{proof}

\subsection{Tail probability bound of $\bar{F}$}
\label{sec:tailprob}
In this subsection, we bound the tail probability of the average fidelity $\bar{F}$ to obtain an interval estimate that is reliable in scenarios with general noise.
\begin{LemA}[Tail probability bound of conditional expectation]\thlabel{lem:tailprob}
$\Set E$ and $F$ are bounded random variables defined on $\mathbb{R}$.
The expectation of $\Set E$ conditioned on $F=f$ satisfies 
\begin{align}
\mathbb{E}[a\mathcal{E}+b|f]=f,\label{eqn:meancondition_ap}
\end{align}
In this case, the tail probability of $F$ is bounded by
\begin{align}
\Pr\big[\big|F-a\mathbb{E}[\mathcal{E}]-b\big| \geq \delta  \big] &\leq 
\frac{a^{2t}}{\delta^{2t}}\mathbb{M}_{\mathrm{c}}^{(2t)}[\mathcal{E}], \quad \forall t\in \mathbb{Z}^+. \label{eqn:ftail_ap}
\end{align}
\end{LemA}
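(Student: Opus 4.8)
The plan is to recognize that the centering constant $a\mathbb{E}[\mathcal{E}]+b$ appearing in \eqref{eqn:ftail_ap} is nothing but the mean $\mathbb{E}[F]$, then to control the $2t$-th central moment of $F$ by that of $\mathcal{E}$, and finally to conclude with a higher-moment Markov inequality. First I would apply the tower property to the hypothesis \eqref{eqn:meancondition_ap}: reading \eqref{eqn:meancondition_ap} as the almost-sure identity $\mathbb{E}[a\mathcal{E}+b\,|\,F]=F$ and taking the expectation over $F$ gives $\mathbb{E}[F]=a\mathbb{E}[\mathcal{E}]+b$, so the event in \eqref{eqn:ftail_ap} is exactly $\{|F-\mathbb{E}[F]|\ge\delta\}$. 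Boundedness of $\mathcal{E}$ and $F$ guarantees that all the moments involved are finite, which is what justifies the subsequent manipulations.

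The central step is to establish the moment comparison $\mathbb{M}_{\mathrm{c}}^{(2t)}[F]\le a^{2t}\,\mathbb{M}_{\mathrm{c}}^{(2t)}[\mathcal{E}]$, which is precisely the abstract version of the Jensen argument already carried out in \eqref{Mc2t-F-E}. Rewriting \eqref{eqn:meancondition_ap} as $F-\mathbb{E}[F]=a\,\mathbb{E}[\mathcal{E}-\mathbb{E}[\mathcal{E}]\,|\,F]$ and using that $x\mapsto x^{2t}$ is convex, Jensen's inequality yields $(\mathbb{E}[\mathcal{E}-\mathbb{E}[\mathcal{E}]\,|\,F])^{2t}\le \mathbb{E}[(\mathcal{E}-\mathbb{E}[\mathcal{E}])^{2t}\,|\,F]$. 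Taking the outer expectation and invoking the tower property once more gives $\mathbb{E}[(F-\mathbb{E}[F])^{2t}]\le a^{2t}\,\mathbb{E}[(\mathcal{E}-\mathbb{E}[\mathcal{E}])^{2t}]$, i.e. the desired inequality between central moments.

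Finally I would apply Markov's inequality to the nonnegative random variable $|F-\mathbb{E}[F]|^{2t}$: since $\{|F-\mathbb{E}[F]|\ge\delta\}=\{|F-\mathbb{E}[F]|^{2t}\ge\delta^{2t}\}$, we obtain $\Pr[|F-\mathbb{E}[F]|\ge\delta]\le \delta^{-2t}\,\mathbb{M}_{\mathrm{c}}^{(2t)}[F]$. Combining this with the moment comparison of the previous step and the identity $a\mathbb{E}[\mathcal{E}]+b=\mathbb{E}[F]$ produces \eqref{eqn:ftail_ap} for every $t\in\mathbb{Z}^+$. The only genuinely delicate point is the Jensen step: one must insist on the \emph{even} exponent $2t$, so that $x^{2t}$ is globally convex and the inequality holds with no sign constraint on $\mathcal{E}-\mathbb{E}[\mathcal{E}]$; everything else is bookkeeping with the tower property, and the restriction to even moments is exactly what makes the resulting bound suitable for a symmetric two-sided credible interval.
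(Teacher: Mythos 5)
Your proposal is correct and follows essentially the same route as the paper's proof: establish $\mathbb{E}[F]=a\mathbb{E}[\mathcal{E}]+b$ from the hypothesis, bound $\mathbb{M}_{\mathrm{c}}^{(2t)}[F]\le a^{2t}\mathbb{M}_{\mathrm{c}}^{(2t)}[\mathcal{E}]$ via Jensen's inequality for the convex function $x^{2t}$, and conclude with the higher-moment Markov/Chebyshev bound. The only cosmetic difference is that you phrase the interchange step abstractly through the tower property of conditional expectation, whereas the paper carries it out with explicit densities and Fubini's theorem, which is the same argument in integral form.
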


\begin{proof}
According to \eqref{eqn:meancondition_ap}, the expectations of $F$ and $\Set E$ are related by
\begin{align}
\mathbb{E}[F]&=\int_{-\infty}^{+\infty}f{\Set P}(f)df = a\int_{-\infty}^{+\infty}\mathbb{E}[\mathcal{E}|f]{\Set P}(f)df + b\nonumber\\ 
&= a\mathbb{E}[\mathcal{E}] +b.
\label{eqn:equalexp}
\end{align}
According to \eqref{eqn:meancondition_ap} and \eqref{eqn:equalexp}, the even-order central moments of $F$ are upper bounded as follows.
\begin{subequations}
\begin{align}
&\mathbb{M}^{(2t)}_{\mathrm{c}}[F] \nonumber\\
&= \mathbb{E}\big[(F-\mathbb{E}[F])^{2t}\big] \nonumber \\
&= \int_{-\infty}^{+\infty} \Big( \int_{-\infty}^{+\infty}(a \varepsilon+b)\Set{P}(\varepsilon|f) d\varepsilon -a\mathbb{E}[\mathcal{E}] -b \Big)^{2t} {\Set P}(f)df \nonumber\\
&= a^{2t}\int_{-\infty}^{+\infty} \Big( \int_{-\infty}^{+\infty}(\varepsilon - \mathbb{E}[\mathcal{E}])\Set{P}(\varepsilon|f) d\varepsilon\Big)^{2t} {\Set P}(f)df \nonumber\\
&\leq a^{2t}\int_{-\infty}^{+\infty} \Big(\int_{-\infty}^{+\infty}(\varepsilon - \mathbb{E}[\mathcal{E}])^{2t}\Set{P}(\varepsilon|f)  d\varepsilon\Big) {\Set P}(f) df\label{eqn:Jensen}\\
&= a^{2t}\int_{-\infty}^{+\infty} (\varepsilon - \mathbb{E}[\mathcal{E}])^{2t} \Big(\int_{-\infty}^{+\infty}\Set{P}(\varepsilon|f){\Set P}(f)  df\Big)  d\varepsilon \label{eqn:Fubini}\\
&= a^{2t}\int_{-\infty}^{+\infty} (\varepsilon - \mathbb{E}[\mathcal{E}])^{2t} \Set{P}(\varepsilon)  d\varepsilon \nonumber\\
&= a^{2t}\mathbb{M}_{\mathrm{c}}^{(2t)}[\Set E],\label{eqn:varboundend}
\end{align}\label{eqn:varbound}
\end{subequations}
where \eqref{eqn:Jensen} is obtained by applying  Jensen's inequality considering that functions $x^{2t}$, $\forall t\in \mathbb{Z}^+$, are convex, and \eqref{eqn:Fubini} is obtained by applying Fubini's theorem based on the fact that $\varepsilon$ is bounded.

According to \eqref{eqn:equalexp} and \eqref{eqn:varboundend}, by applying Chebyshev's inequality (extended to higher moments), \eqref{eqn:ftail_ap} is obtained.
This completes the proof of \thref{lem:tailprob}.
\end{proof}

Based on \thref{lem:moments}~and~\thref{lem:tailprob}, the following theorem provides an interval estimate of $\bar{F}$ in the presence of general noise.

\begin{ThmA}[Interval estimation]\thlabel{thm:cred_interval}
In cases with general noise, the average fidelity $\bar{F}$ lies in the interval ${\Set C}^{(T)}$ with probability at least $\alpha$ conditioned on the \ac{QBER} $\varepsilon_{\SM}$, where 
\begin{align}
{\Set C}^{(T)} =&\big[\tilde{f}-\delta^{(T)},  \tilde{f}+\delta^{(T)} \big],\label{eqn:cre_interval_cor_ap}
\end{align}
$T$ can be any even positive integer, and the radius of the credible interval, $\delta^{(T)}$, is defined by
\begin{align}
\delta^{(T)} = \min_{\scriptsize t\in \{1,2,\ldots,\frac{T}{2}\} }\bigg\{\frac{3}{2}\bigg(\frac{\mathbb{M}^{(2t)}_{\mathrm{c}}\big[\mathcal{E}_{\SU }\big|\varepsilon_{\SM} \big]}{1-\alpha}\bigg)^{\frac{1}{2t}} \bigg\},\label{eqn:deltaRstar_ap}
\end{align}
with $\mathbb{M}^{(2t)}_{\mathrm{c}}\big[\mathcal{E}_{\SU }\big|\varepsilon_{\SM}\big]$ equals to $\mathbb{M}^{(2t)}_{\mathrm{c}}$ defined in Protocol~\ref{alg:fidelityest_int}. 
\end{ThmA}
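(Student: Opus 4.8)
The plan is to apply \thref{lem:tailprob} to the pair $(\mathcal{E}_{\SU},\bar F)$ with every distribution conditioned on the observed \ac{QBER} $\varepsilon_{\SM}$, and then to optimize the resulting family of bounds over the moment order. First I would recast the problem into the form required by \thref{lem:tailprob}. Relation \eqref{eqn:ktoM-m} reads $\bar f = 1 - \frac{3}{2}\mathbb{E}[\mathcal{E}_{\SU}|\bar f,\varepsilon_{\SM}]$, which is exactly hypothesis \eqref{eqn:meancondition_ap} with $a=-\frac{3}{2}$ and $b=1$. Because both fidelity and \ac{QBER} take values in bounded intervals, the boundedness requirement of \thref{lem:tailprob} is met, so the lemma applies with all statistics taken conditional on $\varepsilon_{\SM}$.

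Second, I would verify that the center produced by the lemma coincides with $\tilde f$. The tail bound \eqref{eqn:ftail_ap} is centered at $a\,\mathbb{E}[\mathcal{E}_{\SU}|\varepsilon_{\SM}]+b = 1 - \frac{3}{2}\mathbb{E}[\mathcal{E}_{\SU}|\varepsilon_{\SM}]$, and substituting the posterior mean \eqref{eqn:kmexp_ap} from \thref{lem:moments} gives $1 - \frac{3}{2}\big(\varepsilon_{\SM} + \frac{\frac12-\varepsilon_{\SM}}{M+1}\big)$, which is precisely $\tilde f$ in \eqref{eqn:tildeF}. With $a^{2t}=(\frac{3}{2})^{2t}$, \thref{lem:tailprob} then yields, for every $t\in\mathbb{Z}^+$,
\[
\Pr\big[|\bar F - \tilde f| \geq \delta \,\big|\, \varepsilon_{\SM}\big] \leq \frac{(\frac{3}{2})^{2t}}{\delta^{2t}}\,\mathbb{M}^{(2t)}_{\mathrm{c}}[\mathcal{E}_{\SU}|\varepsilon_{\SM}].
\]
This is the conditional specialization of the moment chain \eqref{Mc2t-F-E} combined with Chebyshev's inequality extended to higher moments.

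Third, for each fixed $t$ I would choose the radius $\delta_t = \frac{3}{2}\big(\mathbb{M}^{(2t)}_{\mathrm{c}}[\mathcal{E}_{\SU}|\varepsilon_{\SM}]/(1-\alpha)\big)^{\frac{1}{2t}}$ that forces the right-hand side down to exactly $1-\alpha$; for this radius $\Pr[|\bar F - \tilde f| < \delta_t \,|\, \varepsilon_{\SM}] \geq \alpha$, so $[\tilde f - \delta_t,\,\tilde f + \delta_t]$ is already a legitimate credible interval for every $t$. Minimizing $\delta_t$ over $t\in\{1,\ldots,\frac{T}{2}\}$ reproduces \eqref{eqn:deltaRstar_ap}, and I would close the argument by recording that $\mathbb{M}^{(2t)}_{\mathrm{c}}[\mathcal{E}_{\SU}|\varepsilon_{\SM}]$ is exactly the quantity $\mathbb{M}^{(2t)}_{\mathrm{c}}$ evaluated in Protocol~\ref{alg:fidelityest_int} through \eqref{eqn:emoments}, matching the statement.

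The main obstacle I anticipate is justifying that passing to the minimum over $t$ preserves validity rather than over-tightening the interval. The key observation is that each $\delta_t$ is precisely the threshold at which the order-$2t$ bound \emph{alone} already certifies coverage $\alpha$; hence a radius $\delta$ yields a valid interval precisely when $\delta \geq \delta_t$ for some admissible $t$, i.e. precisely when $\delta \geq \min_t \delta_t = \delta^{(T)}$. The minimizer $t^*$ therefore hands us a radius $\delta^{(T)}=\delta_{t^*}$ for which the order-$2t^*$ bound still holds verbatim, so $\delta^{(T)}$ is the smallest valid radius and the resulting interval inherits coverage at least $\alpha$. The only remaining care is bookkeeping of the conditioning on $\varepsilon_{\SM}$ throughout, which is routine.
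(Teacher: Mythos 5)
Your proposal is correct and follows essentially the same route as the paper's proof: use \eqref{eqn:ktoM} with the tower property and \thref{lem:moments} to identify the center $\tilde f$, apply \thref{lem:tailprob} with $a=-\frac{3}{2}$, $b=1$ conditioned on $\varepsilon_{\SM}$, calibrate each $\delta_t$ so the order-$2t$ bound equals $1-\alpha$, and take the minimum over $t$, whose coverage is certified by the minimizing order's own bound. The only blemish is your side claim that $\delta^{(T)}$ is the \emph{smallest} valid radius (``a radius yields a valid interval precisely when $\delta\geq\min_t\delta_t$''): Chebyshev-type bounds are sufficient but not necessary for coverage, so smaller valid radii may exist; this overstatement is not used in the argument and does not affect its correctness.
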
 

\begin{proof}
 \cite[Lemma B.4]{Ruan:J23} shows that the \ac{QBER} $\mathcal{E}_{\SM}$ and the average fidelity of the measured qubit pairs, $\bar{f}_{\SM}$, have the following relation.
\begin{align}
\mathbb{E}\bigg[1-\frac{3\mathcal{E}_{\SM }}{2}\Big|\bar{f}_{\SM}\bigg]=\bar{f}_{\SM}.\label{eqn:ktoM-mea}
\end{align}
By applying similar analysis to the measurement outcomes of the thought experiment, i.e., $\{r_n, n\in \SU = \Set N \backslash \SM\}$, it can be seen that
\begin{align}
\mathbb{E}\bigg[1-\frac{3\mathcal{E}_{\SU }}{2}\Big|\bar{f},\varepsilon_{\SM} \bigg]=\bar{f},\label{eqn:ktoM}
\end{align}
where the expectation is taken over the state and the measurement outcomes of the unsampled qubit pairs.
According to \eqref{eqn:kmexp_ap} and \eqref{eqn:ktoM} , the expectation of the average fidelity $\bar{F}$ conditioned on the \ac{QBER} $\varepsilon_{\SM}$ can be expressed as
\begin{subequations}
\begin{align}
\bar{f}_{\mathrm{E}}&=\mathbb{E}[\bar{F}|\varepsilon_{\SM}]\nonumber\\
&=1-\frac{3}{2}\mathbb{E}\big[\mathbb{E}[\mathcal{E}_{\SU }|\bar{f},\varepsilon_{\SM}]\big|\varepsilon_{\SM}]\label{eqn:checkf_arb-a}\\
&=1-\frac{3}{2}\mathbb{E}[\mathcal{E}_{\SU }|\varepsilon_{\SM}]\label{eqn:checkf_arb-b}\\
&= 1- \frac{3}{2}\Big(\varepsilon_{\SM} + \frac{\frac{1}{2}-\varepsilon_{\SM}}{M+1}\Big)\\
&= \tilde{f},
\end{align}
\label{eqn:checkf_arb}
\end{subequations}
where \eqref{eqn:checkf_arb-b} holds because of the tower property of conditional expectation \cite[P.4]{Pit:15}.

Using \eqref{eqn:MCkbar_ap}, \eqref{eqn:ktoM}, and \eqref{eqn:checkf_arb}, and applying Lemma~\ref{lem:tailprob}, one can obtain that
\begin{align}
&\Pr\big[|\bar{F}-\bar{f}_{\mathrm{E}}|  \geq \delta \big| \varepsilon_{\SM} \big]
\nonumber \\
&\qquad\leq \bigg(\frac{3}{2\delta}\bigg)^{2t}\mathbb{M}^{(2t)}_{\mathrm{c}}\big[\mathcal{E}_{\SU }\big|\varepsilon_{\SM} \big], \quad \forall t\in \mathbb{Z}^+
\label{eqn:barFtail_general}
\end{align}
Let $1-\alpha=\bigg(\dfrac{3}{2\delta_t}\bigg)^{2t}\mathbb{M}^{(2t)}_{\mathrm{c}}\big[\mathcal{E}_{\SU }\big|\varepsilon_{\SM} \big]$, then 
\begin{align}
\delta_t = \frac{3}{2}\bigg(\frac{\mathbb{M}^{(2t)}_{\mathrm{c}}\big[\mathcal{E}_{\SU }\big|\varepsilon_{\SM} \big]}{1-\alpha}\bigg)^{\frac{1}{2t}}.
\end{align}
Let $\delta^{(T)} = \argmin_{\scriptsize t\in \mathbb{Z}^+,\, \scriptsize t\leq T }\big\{\delta_t
\big\}$, and substitute it into \eqref{eqn:barFtail_general}, we obtain
\begin{align}
\Pr\big[|\bar{F}-\bar{f}_{\mathrm{E}}|  \leq \delta^{(T)} \big| \varepsilon_{\SM} \big] &\geq \alpha.
\label{eqn:barFtail_general2}
\end{align}
According to \eqref{eqn:barFtail_general2},  \eqref{eqn:cre_interval_cor_ap} is proved. 

The even central moments $\mathbb{M}^{(2t)}_{\mathrm{c}}\big[\mathcal{E}_{\SU }\big|\varepsilon_{\SM} \big]$ can be calculated according to their definition. According to \eqref{eqn:kmexp_ap}, $\varepsilon_{\mathrm E}$ is the expectation of the posterior distribution of $\mathcal{E}_{\SU }$. 
Thus, the central moments of this posterior distribution equals to $\mathbb{M}^{(2t)}_{\mathrm{c}}$ defined in Protocol~\ref{alg:fidelityest_int}.
\end{proof}

Since the expressions for the higher-order moments of the posterior distribution of $\mathcal{E}_{\SU }$ are complicated, 
it is not easy to draw insights from the credible interval given in \thref{thm:cred_interval}.
We have the following result to illustrate the impact of general noise on the estimation accuracy.

\begin{CorA}[Expression of ${\Set C}^{(2)}$]\thlabel{cor:cred_interval_2}
Given $\varepsilon_{\SM}$, the average fidelity $\bar{F}$ lies in interval ${\Set C}^{(2)}$ with probability at least $\alpha$, where
\begin{align}
{\Set C}^{(2)} =\big[\tilde{f}-\delta^{(2)}, \tilde{f}+\delta^{(2)} \big],\label{eqn:cre_interval_cor_sim}
\end{align}
in which the radius of the credible interval
\begin{align}
\delta^{(2)}=  \sqrt{\frac{1}{1-\alpha}}\sqrt{\frac{N+1}{N-M}}\sqrt{\frac{(2\tilde{f}+1)(1-\tilde{f})}{2(M+2)}}.\label{eqn:delta-s}
\end{align}
\end{CorA}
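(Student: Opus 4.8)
The plan is to specialize \thref{thm:cred_interval} to the case $T=2$ and then reduce the resulting variance to the stated closed form; the argument is essentially algebraic, since all the analytic content—the tail bound and the moment formulas—is already supplied by \thref{thm:cred_interval} and \thref{lem:moments}. When $T=2$ the minimization in \eqref{eqn:deltaRstar_ap} runs over the single index $t=1$, so the radius collapses to
\begin{align}
\delta^{(2)} = \frac{3}{2}\sqrt{\frac{\mathbb{M}^{(2)}_{\mathrm{c}}\big[\mathcal{E}_{\SU}\big|\varepsilon_{\SM}\big]}{1-\alpha}} = \frac{3}{2}\sqrt{\frac{\mathbb{V}\big[\mathcal{E}_{\SU}\big|\varepsilon_{\SM}\big]}{1-\alpha}},\nonumber
\end{align}
where the second central moment is identified with the variance. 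The interval endpoints $\tilde{f}\pm\delta^{(2)}$ then coincide with \eqref{eqn:cre_interval_cor_sim}, so it only remains to verify that this radius equals \eqref{eqn:delta-s}.

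Next I would substitute the explicit variance \eqref{eqn:kmvar_ap}, namely $\mathbb{V}[\mathcal{E}_{\SU}|\varepsilon_{\SM}] = \tfrac{N+1}{(N-M)(M+2)}\,\varepsilon_{\mathrm E}(1-\varepsilon_{\mathrm E})$, and eliminate the auxiliary mean $\varepsilon_{\mathrm E}$ in favour of the interval center $\tilde{f}$. The required link is the identity $\tilde{f}=1-\tfrac{3}{2}\varepsilon_{\mathrm E}$ derived in \eqref{eqn:checkf_arb}, which inverts to $\varepsilon_{\mathrm E}=\tfrac{2}{3}(1-\tilde{f})$. Inserting this yields
\begin{align}
\varepsilon_{\mathrm E}(1-\varepsilon_{\mathrm E}) = \tfrac{2}{3}(1-\tilde{f})\Big(1-\tfrac{2}{3}(1-\tilde{f})\Big) = \frac{2\,(1-\tilde{f})(2\tilde{f}+1)}{9}.\nonumber
\end{align}

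Finally I would feed this back into the collapsed radius. The outer factor $\tfrac{3}{2}$ absorbs the $\sqrt{2/9}=\sqrt{2}/3$ arising from the numerical constant, producing an overall $1/\sqrt{2}$, and regrouping the remaining factors into the three square roots $\sqrt{1/(1-\alpha)}$, $\sqrt{(N+1)/(N-M)}$, and $\sqrt{(2\tilde{f}+1)(1-\tilde{f})/(2(M+2))}$ reproduces \eqref{eqn:delta-s} exactly. The only point demanding care is the bookkeeping of the constants $\tfrac{3}{2}$ and $\tfrac{2}{9}$; since there is no conceptual obstacle beyond this tracking, the corollary follows.
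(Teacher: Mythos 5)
Your proposal is correct and follows the same route as the paper's proof: set $T=2$ in \thref{thm:cred_interval} so the minimization collapses to $\delta^{(2)}=\frac{3}{2}\sqrt{\mathbb{V}[\mathcal{E}_{\SU}|\varepsilon_{\SM}]/(1-\alpha)}$, substitute the variance \eqref{eqn:kmvar_ap}, and eliminate $\varepsilon_{\mathrm E}$ via $\tilde{f}=1-\frac{3}{2}\varepsilon_{\mathrm E}$. Your algebra, including the identity $\varepsilon_{\mathrm E}(1-\varepsilon_{\mathrm E})=\frac{2(1-\tilde{f})(2\tilde{f}+1)}{9}$ and the constant bookkeeping $\frac{3}{2}\sqrt{2/9}=1/\sqrt{2}$, checks out and reproduces \eqref{eqn:delta-s} exactly.
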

\begin{proof} In \eqref{eqn:cre_interval_cor_ap}, we set $T=2$. In this case,
\begin{align}
{\Set C}^{(2)} =&\big[\tilde{f}-\delta^{(2)}, \tilde{f}+\delta^{(2)} \big],\label{eqn:C2}
\end{align}
where according to \eqref{eqn:kmvar_ap},
\begin{align}
\delta^{(2)}&= \frac{3}{2}\sqrt{\frac{\mathbb{V}\big[\mathcal{E}_{\SU }\big|\varepsilon_{\SM} \big]}{1-\alpha}}\nonumber
\\&= \sqrt{\frac{9(N+1)}{4(1-\alpha)(N-M)(M+2)}\varepsilon_{\mathrm E}(1-\varepsilon_{\mathrm E})}\nonumber
\\&= \sqrt{\frac{1}{1-\alpha}}\sqrt{\frac{N+1}{N-M}}\sqrt{\frac{(2\tilde{f}+1)(1-\tilde{f})}{2(M+2)}}.\label{eqn:delta}
\end{align}
This completes the proof of \thref{cor:cred_interval_2}.
\end{proof}

\section{Determine  the maximum computed moment}
\label{sec:Tstar}
The maximum computed moment $T$ balances the accuracy of the credible interval and the computational complexity.
The most accurate credible interval is defined as
\begin{align}
{\Set C}^{*} =&\big[\tilde{f}-\delta^*, \tilde{f}+\delta^* \big],\label{eqn:cre_interval_cor_opt}
\end{align}
where 
\begin{align}
\delta^* = \min_{\scriptsize T\mbox{ even} }\bigg\{\frac{3}{2}\bigg(\frac{\mathbb{M}^{(T)}_{\mathrm{c}}\big[\mathcal{E}_{\SU }\big|\varepsilon_{\SM} \big]}{1-\alpha}\bigg)^{\frac{1}{T}} \bigg\}.\label{eqn:deltastar}
\end{align}
In this case, the ideal choice of $T$ is the minimum number that achieves the most accurate credible interval, i.e.,
\begin{align}
T^* = \argmin_{\scriptsize T\mbox{ even}} \big\{{\Set C}^{(T)} ={\Set C}^{*}\big\}.\label{eqn:Tstar-2}
\end{align}

\eqref{eqn:Tstar-2} is difficult to solve because of the complicated expression of the central moments of the posterior distribution of $\Set{E}_{\SU}$, which is a beta-binomial distribution compressed by a factor $N-M$.
To overcome this difficulty, we consider the following properties of beta-binomial distributions.

The beta-binomial distribution of $\Set E_{\SU}$ is a composite distribution composed of a beta distribution with parameters $e_{\SM}+\frac{1}{2}$ and $M- e_{\SM}+\frac{1}{2}$ and a set of binomial distributions with $N-M$ Bernoulli trials and a probability of success drawn randomly from the beta distribution.
When 
\begin{align}
e_{\SM}\gg 1, \quad M- e_{\SM}\gg1, \quad \mbox{and}\quad N-M\gg1,\label{eqn:eMNgg1}
\end{align}
the beta distribution and the binomial distributions all approximate the normal distribution \cite{Das:B10}.
Since the composite distribution of normal distributions is still normal, when \eqref{eqn:eMNgg1} holds, it makes sense to simplify the posterior distribution of $\Set E_{\SU}$ to a normal distribution.
In the following analysis, we perform this simplification. 

Given that the posterior distribution of $\Set{E}_{\SU}$ is a normal distribution with standard deviation~$\sigma$,
the even central moments of this distribution are given by
\begin{align}
\mathbb{M}^{(T)}_{\mathrm{c}}\big[\Set{E}_{\SU}\big|\varepsilon_{\SM}\big] = \frac{(\sqrt{2}\sigma)^T}{\sqrt{\pi}}\Gamma\Big(\frac{T+1}{2}\Big).\label{eqn:MTnormal}
\end{align}
Substituting \eqref{eqn:MTnormal} into \eqref{eqn:deltastar} yields
\begin{align}
\delta^* = \min_{\scriptsize T\in \mathbb{Z}^+, 2|T }\bigg\{\frac{3\sigma}{\sqrt{2}}\bigg(\frac{\Gamma\big(\frac{T+1}{2}\big)}{\sqrt{\pi}(1-\alpha)}\bigg)^{\frac{1}{T}} \bigg\}.\label{eqn:minT_1}
\end{align}
By omitting the constant terms and taking the logarithm, one can see that the minimization problem in \eqref{eqn:minT_1} has the same optimal solution as
\begin{align}
\min_{\scriptsize T\in \mathbb{Z}^+, 2|T }\bigg\{\frac{1}{T}\bigg(
\ln\Big(\Gamma\Big(\frac{T+1}{2}\Big)\Big)- \ln\big({\sqrt{\pi}(1-\alpha)}\big) \bigg)\bigg\}.\label{eqn:minT_2}
\end{align}

Since $\frac{T+1}{2}\geq \frac{3}{2} >1$, one can apply Stirling's formula \cite{MukSon:J16} to accurately approximate the gamma function $\Gamma\big(\frac{T+1}{2}\big)$. 
In this case, \eqref{eqn:minT_2} can be rewritten as
\begin{align}
\min_{\scriptsize T\in \mathbb{Z}^+, 2|T }\bigg\{\frac{1}{T}\bigg(
\ln\Big(\sqrt{2\pi e}\Big(\frac{T-1}{2e}\Big)^{\frac{T}{2}}\Big)- \ln\big({\sqrt{\pi}(1-\alpha)}\big) \bigg)\bigg\},
\end{align}
which is equivalent to
\begin{align}
\min_{\scriptsize T\in \mathbb{Z}^+, 2|T }\bigg\{
\frac{1}{2}\ln(T-1)
+
\frac{1}{T}\Big(\frac{1}{2}\big(1+\ln(2)\big)- \ln(1-\alpha) \Big)\bigg\}.\label{eqn:minT_2}
\end{align}

Define the function
\begin{align}
g(T) = \frac{1}{2}\ln(T-1)+\frac{1}{T}l(\alpha),
\end{align}
where 
\begin{align}
l(\alpha) = \frac{1}{2}(1+\ln(2))- \ln(1-\alpha),
\end{align}
and extend the domain of $g$ to $T\in \mathbb{R}^+$.
In this case,
\begin{align}
\frac{\partial g}{\partial T} = \frac{1}{2(T-1)} - \frac{1}{T^2} l(\alpha).
\end{align}
To minimize $g(T)$, set $\frac{\partial f}{\partial T}=0$, which yields
\begin{align}
T^2 - 2l(\alpha)T + 2l(\alpha) = 0. \label{eqn:Tquad}
\end{align}
By solving \eqref{eqn:Tquad} and omitting the solution with $T<1$, the following expression is obtained:
\begin{align}
\begin{split}
T^* &= l(\alpha)\bigg(1+\sqrt{1-\frac{2}{l(\alpha)}}\bigg)\\
&\approx l(\alpha)\bigg(1+1-\frac{1}{l(\alpha)}\bigg)\\
&=-2\ln(1-\alpha) + \ln(2),
\end{split}\label{eqn:Tsolution}
\end{align}
where the approximation is obtained considering the Taylor expansion of $\sqrt{1-x}$ and maintaining its first two terms.

\eqref{eqn:Tsolution} is obtained considering $T\in \mathbb{R}^+$. 
Since $T$ is an even positive integer, the rounding function must be used.
Therefore, the solution of \eqref{eqn:minT_2} has the form
\begin{align}
T^* = 2\bigg\lfloor\frac{-2\ln(1-\alpha) + c}{2}\bigg\rceil,\label{eqn:Tstar-3}
\end{align}
where $c$ is set to compensate for errors caused by the rounding function.

\begin{figure*}[t] \centering
\includegraphics[scale=1]{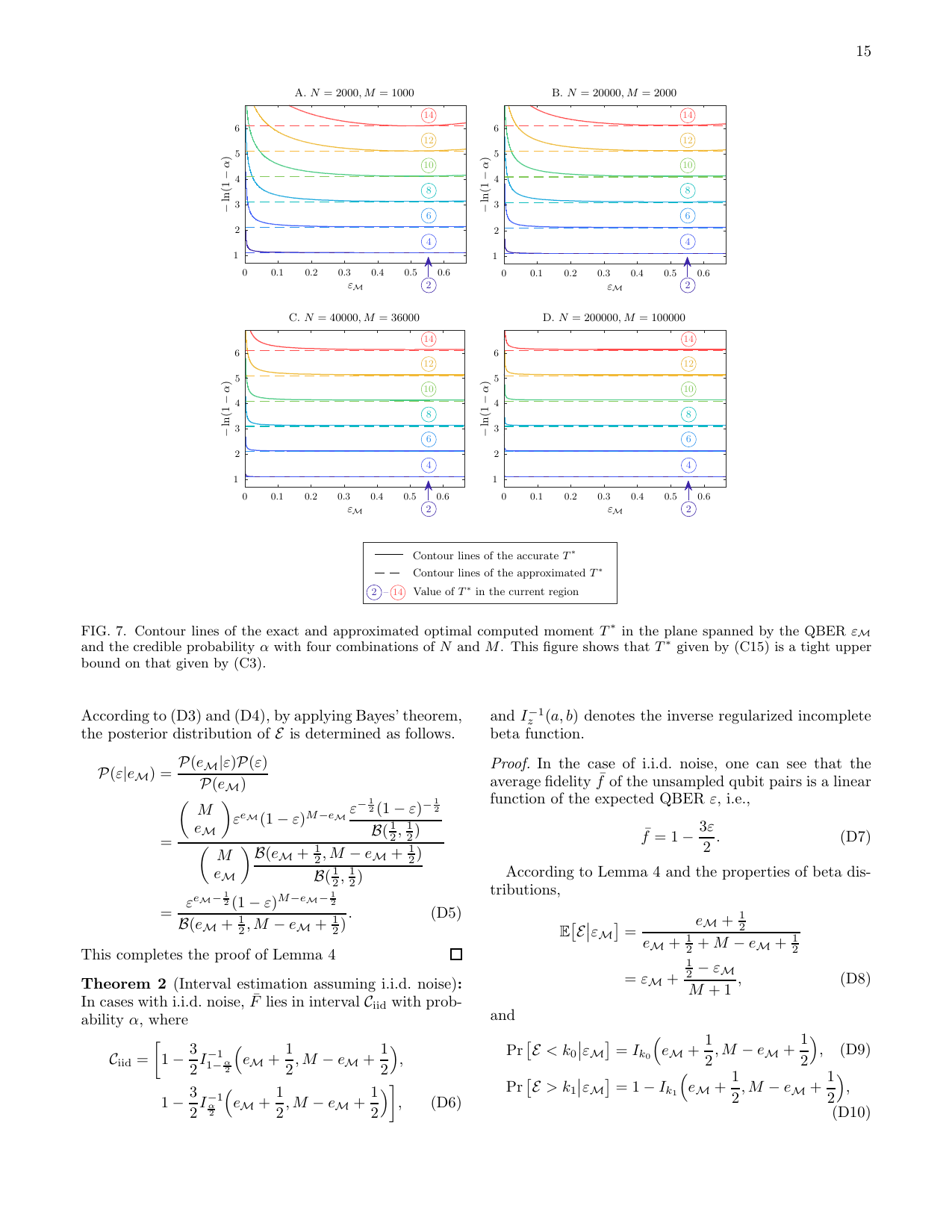}
\caption {Contour lines of the exact and approximated optimal computed moment $T^*$ in the plane spanned by the \ac{QBER} $\varepsilon_{\SM}$ and the credible probability $\alpha$ with four combinations of $N$ and $M$. 
This figure shows that $T^*$ given by \eqref{eqn:Tstar-3} is a tight upper bound on that given by \eqref{eqn:Tstar-2}.
}
\label{fig_Cre_Int_moments}
\end{figure*}

The value of $c$ is determined by comparing the exact value, i.e., $T^*$ given by \eqref{eqn:Tstar-2}, with the approximated value, i.e., $T^*$ given by \eqref{eqn:Tstar-3}, in a range of parameters.
More precisely, parameters satisfying
\begin{align*}
N&\in\{2000,4000,\ldots,200000\},\\
M&\in\{1000,2000,\ldots,100000\},\\
\alpha &\in \{0.5, 0.501, \ldots,0.999\}, \mbox{ and }\\
\varepsilon_{\SM}&\in\{0,0.001,\ldots,0.666\},
\end{align*}
are tested, and it turns out that \eqref{eqn:Tstar-3} is a tight upper bound of \eqref{eqn:Tstar-2} when $c$ is set to $0.8$.
Therefore, Protocol~\ref{alg:fidelityest_int} uses this setting.
Fig.~\ref{fig_Cre_Int_moments} shows the results of four representative tests.
For quick reference, Table~\ref{tab:optmoment} lists the value of $T^*$ given by \eqref{eqn:Tstar-3} for several typical credible probability values, $\alpha$.

\begin{table}[h]
\caption{The optimal computed moment $T^*$ for different $\alpha$ \label{tab:optmoment}}\vspace{2mm}
\centerline{
\begin{tabular}{C{1cm}"C{1cm}|C{1cm}|C{1cm}|C{1cm}|C{1cm}}
\hlinew{1pt}
$\alpha$& 0.80 & 0.90 & 0.95 & 0.98 & 0.99 \\ 
\hline
$T^*$&\Clb 4 &\Clb  6 &\Clb  6 &\Clb  8 &\Clb  10\\ 
\hlinew{1pt}
\end{tabular}
}
\end{table}

\section{Interval estimation with i.i.d.~noise}
\label{sec:iid}
In the case of \ac{i.i.d.}~noise, a critical value is the expected \ac{QBER} conditional on the average fidelity, i.e.,
\begin{align}
\varepsilon = \mathbb{E}\big[\mathcal{E}_{\SM}\big | \bar{f}\, \big].\label{eqn:k}
\end{align}
We will first derive the posterior distribution of $\Set E$ in \thref{lem:postk} and then obtain credible interval of $\bar {F}$ in \thref{thm:Bayid}. 
Here, $\Set E$ and $\bar{F}$ are the random variable form of $\varepsilon$ and $\bar{f}$, respectively.

\begin{LemA}[Posterior distribution of $\mathcal{E}$] 
\thlabel{lem:postk}
Given the number of errors measured $e_{\SM}$, the posterior distribution of $\mathcal{E}$ defined in \eqref{eqn:k}  is a beta distribution with parameters $e_{\SM}+\frac{1}{2}$ and $M- e_{\SM}+\frac{1}{2}$, i.e.,
\begin{align}
\Set{P}(\varepsilon | e_{\SM})&=\frac{\varepsilon^{e_{\SM}-\frac{1}{2}}(1-\varepsilon)^{M- e_{\SM}-\frac{1}{2}}}{\mathcal{B}(e_{\SM}+\frac{1}{2},M- e_{\SM}+\frac{1}{2})}.
\end{align}
\end{LemA}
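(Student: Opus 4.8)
The plan is to apply Bayes' theorem with a binomial likelihood and the Jeffreys prior, exploiting the conjugacy between the beta and binomial families. The first step is to identify the likelihood $\Set{P}(e_{\SM}\mid\varepsilon)$. In the i.i.d.\ setting all $N$ qubit pairs share the same state, so by the analysis underlying \eqref{eqn:ktoM-mea} each measurement outcome $r_n$, $n\in\SM$, is an independent Bernoulli trial whose error probability is exactly $\varepsilon=\mathbb{E}[\mathcal{E}_{\SM}\mid\bar f]$ defined in \eqref{eqn:k}. Consequently, the number of errors $e_{\SM}=\sum_{n\in\SM}r_n$ is binomially distributed conditioned on $\Set E=\varepsilon$, i.e.
\begin{align}
\Set{P}(e_{\SM}\mid\varepsilon)=\binom{M}{e_{\SM}}\varepsilon^{e_{\SM}}(1-\varepsilon)^{M-e_{\SM}}.
\end{align}

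The second step specifies the prior and performs the Bayesian update. Since there is no prior information about $\bar f$ and hence about the continuous parameter $\varepsilon$, I would adopt the same noninformative Jeffreys prior as in the proof of \thref{lem:unsampled}, namely the $\mathcal{B}(\tfrac12,\tfrac12)$ density $\Set{P}(\varepsilon)=\varepsilon^{-1/2}(1-\varepsilon)^{-1/2}/\mathcal{B}(\tfrac12,\tfrac12)$. Unlike the discrete sum $S_{\Set N}$ treated in \ref{sec:Bayesian}, here $\varepsilon$ is already a continuous probability, so no hierarchical construction is required and the Jeffreys prior applies directly. Bayes' theorem then gives
\begin{align}
\Set{P}(\varepsilon\mid e_{\SM})=\frac{\Set{P}(e_{\SM}\mid\varepsilon)\,\Set{P}(\varepsilon)}{\Set{P}(e_{\SM})}\propto\varepsilon^{e_{\SM}-1/2}(1-\varepsilon)^{M-e_{\SM}-1/2},
\end{align}
where all factors independent of $\varepsilon$ have been absorbed into the normalization. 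The right-hand side is the kernel of a beta density with parameters $e_{\SM}+\tfrac12$ and $M-e_{\SM}+\tfrac12$; matching it against $\mathcal{B}(a,b)=\int_0^1 t^{a-1}(1-t)^{b-1}\,dt$ forces the normalizing constant to be $\mathcal{B}(e_{\SM}+\tfrac12,M-e_{\SM}+\tfrac12)$, which yields the claimed expression.

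I expect the main obstacle to be conceptual rather than computational: rigorously justifying that the likelihood is \emph{exactly} binomial. This relies on the i.i.d.\ assumption to guarantee that the per-pair error events are independent and identically distributed with the common parameter $\varepsilon$, which is precisely the structural feature absent from the general-noise analysis. Once the binomial likelihood is established, the remainder is immediate from beta--binomial conjugacy and requires no delicate estimates.
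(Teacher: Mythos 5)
Your proposal is correct and follows essentially the same route as the paper: a binomial likelihood justified by the i.i.d.\ assumption, the Jeffreys $\mathcal{B}(\tfrac12,\tfrac12)$ prior, and a Bayesian update to the conjugate beta posterior. The only cosmetic difference is that the paper explicitly computes the marginal $\Set{P}(e_{\SM})$ and divides, whereas you identify the beta kernel and read off the normalization by conjugacy; these are the same argument.
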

\begin{proof}
With \ac{i.i.d.} noise, the measurement outcomes $r_n$, $n\in \Set M$, are \ac{i.i.d.} binary random variables.
Thus, given that the expected \ac{QBER} ${\Set E}=\varepsilon$, the number of measured errors, $E_{\Set M}$, follows a  binomial distribution. 
Therefore, the Jeffreys prior of $\Set E$
is given by \cite[P. 52]{GelCarSteDunVehRub:B14}
\begin{align}
\Set{P}(\varepsilon) &= \frac{\sqrt{\Set{I}(\varepsilon)}}{\int_0^1\sqrt{\Set{I}(\varepsilon)}dk}\nonumber\\
&=\frac{\varepsilon^{-\frac{1}{2}}(1-\varepsilon)^{-\frac{1}{2}}}{\int_0^1\varepsilon^{-\frac{1}{2}}(1-\varepsilon)^{-\frac{1}{2}}dk}=\frac{\varepsilon^{-\frac{1}{2}}(1-\varepsilon)^{-\frac{1}{2}}}{\mathcal{B}(\frac{1}{2},\frac{1}{2})}.\label{eqn:pk_prior}
\end{align}
where $\Set{I}(\cdot)$ denotes the Fisher information.

According to \eqref{eqn:pk_prior}, the prior distribution of $E_{\Set M}$ is given by
\begin{align}
&\Set{P}(e_{\Set M}) \nonumber\\
&= \int_0^1\Set{P}(e_{\Set M}|\varepsilon)\Set{P}(\varepsilon)d\varepsilon\nonumber\\
&=\int_0^1 \bigg(\begin{array}{c}M\\ e_{\Set M}\end{array}\bigg)
\varepsilon^{e_{\Set M}}(1-\varepsilon)^{M- e_{\Set M}}
\frac{\varepsilon^{-\frac{1}{2}}(1-\varepsilon)^{-\frac{1}{2}}}{\mathcal{B}(\frac{1}{2},\frac{1}{2})} d\varepsilon\nonumber\\
&=\bigg(\begin{array}{c}M\\ e_{\Set M}\end{array}\bigg)\frac{1}{\mathcal{B}(\frac{1}{2},\frac{1}{2})}
 \int_0^1 \varepsilon^{e_{\Set M}-\frac{1}{2}}(1-\varepsilon)^{M- e_{\Set M}-\frac{1}{2}} d\varepsilon\nonumber\\
&=\bigg(\begin{array}{c}M\\ e_{\Set M}\end{array}\bigg)\frac{\mathcal{B}(e_{\Set M}+\frac{1}{2},M- e_{\Set M}+\frac{1}{2})}{\mathcal{B}(\frac{1}{2},\frac{1}{2})}.\label{eqn:pKM_prior}
\end{align}
According to \eqref{eqn:pk_prior} and \eqref{eqn:pKM_prior}, by applying Bayes' theorem, the posterior distribution of $\mathcal{E}$ is determined as follows.
\begin{align}
&\Set{P}(\varepsilon| e_{\Set M})\nonumber\\
&=\frac{\Set{P}(e_{\Set M}|\varepsilon)\Set{P}(\varepsilon)}{\Set{P}(e_{\Set M})}\nonumber\\
&=\frac{\bigg(\begin{array}{c}M\\ e_{\Set M}\end{array}\bigg)
\varepsilon^{e_{\Set M}}(1-\varepsilon)^{M- e_{\Set M}}
\dfrac{\varepsilon^{-\frac{1}{2}}(1-\varepsilon)^{-\frac{1}{2}}}{\mathcal{B}(\frac{1}{2},\frac{1}{2})}}{\bigg(\begin{array}{c}M\\ e_{\Set M}\end{array}\bigg)\dfrac{\mathcal{B}(e_{\Set M}+\frac{1}{2},M- e_{\Set M}+\frac{1}{2})}{\mathcal{B}(\frac{1}{2},\frac{1}{2})}}\nonumber\\
&=\frac{\varepsilon^{e_{\Set M}-\frac{1}{2}}(1-\varepsilon)^{M- e_{\Set M}-\frac{1}{2}}}{\mathcal{B}(e_{\Set M}+\frac{1}{2},M- e_{\Set M}+\frac{1}{2})}.
\label{eqn:k_post}
\end{align}
This completes the proof of \thref{lem:postk}
\end{proof}

\begin{ThmA}[Interval estimation assuming \ac{i.i.d.} noise]\thlabel{thm:Bayid}
In cases with \ac{i.i.d.}~noise, $\bar{F}$ lies in interval ${\Set C}_{\mathrm{iid}}$ with probability $\alpha$, where
\begin{align}
{\Set C}_{\mathrm{iid}} = \bigg[&1-\frac{3}{2}I^{-1}_{1-\frac{\alpha}{2}}\Big(e_{\SM}+\frac{1}{2} , M- e_{\SM}+\frac{1}{2}\Big),
\nonumber\\
&1-\frac{3}{2}I^{-1}_{\frac{\alpha}{2}}\Big(e_{\SM}+\frac{1}{2} , M- e_{\SM}+\frac{1}{2}\Big)\bigg],
\label{eqn:Cid}
\end{align}
and $I^{-1}_{z}(a,b)$ denotes the inverse regularized incomplete beta function.
\end{ThmA}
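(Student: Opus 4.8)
The plan is to push the Beta posterior of \thref{lem:postk} forward through the deterministic, strictly decreasing affine map that links fidelity and \ac{QBER} under \ac{i.i.d.}~noise, and then read off the credible interval of $\bar{F}$ directly from the Beta quantile function. The crucial simplification relative to the general-noise case (\thref{thm:cred_interval}) is that here no tail bound is needed: because all qubit pairs share a single common state, the average fidelity of the unsampled pairs is a \emph{fixed} function of the expected \ac{QBER}, so the posterior of $\bar{F}$ is \emph{exactly} the image of the posterior of $\mathcal{E}$, and the coverage becomes an equality rather than the inequality of the general case.

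First I would record the deterministic relation. Combining the definition \eqref{eqn:k} with the linear identity \eqref{eqn:ktoM-mea}, under \ac{i.i.d.}~noise every pair carries the same fidelity $\bar{f}=1-\frac{3}{2}\varepsilon$, so as random variables $\bar{F}=1-\frac{3}{2}\mathcal{E}$. \thref{lem:postk} then gives the posterior of $\mathcal{E}$ conditioned on $e_{\SM}$ as $\mathrm{Beta}(a,b)$ with $a=e_{\SM}+\frac{1}{2}$ and $b=M-e_{\SM}+\frac{1}{2}$, whose CDF is the regularized incomplete beta function $I_x(a,b)$, and whose quantile function is therefore $I^{-1}_{z}(a,b)$.

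Next I would perform the change of variables. Since $y\mapsto\frac{2}{3}(1-y)$ is strictly decreasing,
\begin{align}
\Pr\big[\bar{F}\le y\,\big|\,e_{\SM}\big]
&=\Pr\Big[\mathcal{E}\ge\tfrac{2}{3}(1-y)\,\Big|\,e_{\SM}\Big]\nonumber\\
&=1-I_{\frac{2}{3}(1-y)}(a,b).
\end{align}
Inverting this relation, an endpoint of $\bar{F}$ at which a prescribed tail mass accumulates is the image under $y=1-\frac{3}{2}\varepsilon$ of the corresponding quantile of $\mathcal{E}$, supplied by $I^{-1}_{z}(a,b)$. Because the map is decreasing, the \emph{lower} endpoint of the $\bar{F}$-interval is the image of the \emph{upper} quantile of $\mathcal{E}$ and vice versa; splitting the excluded probability symmetrically between the two tails at the levels appearing in \eqref{eqn:Cid} then yields the two endpoints of $\Set{C}_{\mathrm{iid}}$, and summing the two tail masses confirms the stated credible probability.

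The calculation is short; the only point that needs care is the quantile reversal induced by the decreasing transformation, which is precisely what produces the cross-assignment of the subscripts $1-\frac{\alpha}{2}$ and $\frac{\alpha}{2}$ to the lower and upper endpoints in \eqref{eqn:Cid}. I would also note in passing that the map sends the Beta support $[0,1]$ onto $[-\frac{1}{2},1]$, so the formal posterior of $\bar{F}$ may place mass on unphysical fidelities; this is harmless for the interval construction and is confined to the tails whenever $\varepsilon_{\SM}$ is small.
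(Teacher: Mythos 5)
Your proposal is correct and follows essentially the same route as the paper's own proof, which likewise combines the i.i.d.\ identity $\bar{f}=1-\frac{3}{2}\varepsilon$ with the beta posterior of \thref{lem:postk} and its CDF expressions \eqref{Prklow}--\eqref{Prkhigh}, so that the credible interval is simply the decreasing affine image of beta quantiles, including the quantile cross-assignment you emphasize. One caveat you inherit from the paper rather than introduce: under the standard convention $\Pr\big[\mathcal{E}\le I^{-1}_{z}(a,b)\,\big|\,e_{\SM}\big]=z$, the levels $\frac{\alpha}{2}$ and $1-\frac{\alpha}{2}$ printed in \eqref{eqn:Cid} yield coverage $\big(1-\frac{\alpha}{2}\big)-\frac{\alpha}{2}=1-\alpha$, so the symmetric splitting of the excluded mass $1-\alpha$ (which is what your argument actually performs) requires the levels $\frac{1-\alpha}{2}$ and $\frac{1+\alpha}{2}$, consistent with the $Q\big(\frac{1+\alpha}{2}\big)$ appearing in \thref{cor:cre_inter}.
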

\begin{proof}
In the case of \ac{i.i.d.} noise, one can see that
the average fidelity  $\bar{f}$ of the unsampled qubit pairs is a linear function of the expected \ac{QBER} $\varepsilon$, i.e.,
\begin{align}
\bar{f}=1- \frac{3\varepsilon}{2}.\label{eqn:barFkiid}
\end{align}

According to \thref{lem:postk} and the properties of beta distributions,
\begin{align}
\mathbb{E}\big[\mathcal{E} \big|\varepsilon_{\Set M} \big] 
&= \frac{e_{\Set M}+\frac{1}{2} }{e_{\Set M}+\frac{1}{2}  + M- e_{\Set M}+\frac{1}{2}} \nonumber\\
&=\varepsilon_{\Set M} + \frac{\frac{1}{2}-\varepsilon_{\Set M}}{M+1},\label{expek}
\end{align}
and
\begin{align}
\Pr\big[\mathcal{E}<k_0 \big|\varepsilon_{\Set M} \big] &= I_{k_0}\Big(e_{\Set M}+\frac{1}{2} , M- e_{\Set M}+\frac{1}{2}\Big),\label{Prklow}\\
\Pr\big[\mathcal{E}>k_1 \big|\varepsilon_{\Set M} \big] &= 1-I_{k_1}\Big(e_{\Set M}+\frac{1}{2}, M- e_{\Set M}+\frac{1}{2}\Big),\label{Prkhigh}
\end{align}
where $I_{x}(a,b)$ denotes the regularized incomplete beta function with argument $x$ and parameters $a$ and $b$.

According to  \eqref{eqn:barFkiid} and \eqref{expek}, the expected value of $\bar{F}$ conditioned on the estimation outcome is given by
\begin{align}
\bar{f}_{\mathrm{E}}&= \mathbb{E}\bigg[1-\frac{3\mathcal{E}}{2} \bigg|\varepsilon_{\Set M}\bigg]
= 1- \frac{3}{2}\Big(\varepsilon_{\Set M} + \frac{\frac{1}{2}-\varepsilon_{\Set M}}{M+1}\Big).
\end{align}
According to \eqref{eqn:barFkiid}, \eqref{Prklow}, and \eqref{Prkhigh}, it can be obtained that $\bar{F}$ is in interval ${\Set C}_{\mathrm{iid}}$
defined in \eqref{eqn:Cid} with probability $\alpha$.
This completes the proof of \thref{thm:Bayid}.
\end{proof}

The credible interval obtained in \thref{thm:Bayid} is relatively complicated. 
 The following corollary simplifies the expression of the credible interval in the asymptotic region with large $M$ to gain insight.

\begin{CorA}[Asymptotic credible interval]\thlabel{cor:cre_inter} 
In the case of \ac{i.i.d.}~noise and $M\gg 1$,
the credible interval of $\bar{F}$ simplifies as follows.
\begin{align}
{\Set C}_{\mathrm{iid}} = \big[\tilde{f}-\delta_{\mathrm{iid}},\tilde{f}+\delta_{\mathrm{iid}}\big],\label{eqn:Cid2}
\end{align}
where the radius of the interval $\delta_{\mathrm{iid}}$ is defined as
\begin{align}
\delta_{\mathrm{iid}} = Q\Big(\frac{1+\alpha}{2}\Big)\sqrt{\frac{(2\tilde{f}+1)(1-\tilde{f})}{2(M+2)}},
\end{align}
in which $Q(\cdot)$ is the quantile function of the standard normal distribution.
\end{CorA}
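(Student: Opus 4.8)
The plan is to pass from the exact credible interval of \thref{thm:Bayid} to its large-$M$ limit by replacing the beta posterior of $\mathcal{E}$ from \thref{lem:postk} with its Gaussian approximation. By \thref{lem:postk} the posterior of $\mathcal{E}$ is $\mathrm{Beta}(a,b)$ with $a = e_{\SM}+\tfrac{1}{2}$ and $b = M-e_{\SM}+\tfrac{1}{2}$, so the endpoints of $\Set{C}_{\mathrm{iid}}$ in \eqref{eqn:Cid} are the images, under the affine map \eqref{eqn:barFkiid} $\bar f = 1-\tfrac{3}{2}\mathcal{E}$, of the two symmetric quantiles of this beta law that together enclose posterior probability $\alpha$. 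Since $\mathcal{E}$ is a linear reparametrisation of $\bar F$, it suffices to approximate these beta quantiles and then transport them through the affine map.

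First I would record the exact moments of the beta posterior. With $a+b = M+1$, its mean is $\mu = \tfrac{a}{a+b} = \tfrac{e_{\SM}+1/2}{M+1}$ and its variance is $\sigma^2 = \tfrac{ab}{(a+b)^2(a+b+1)} = \tfrac{\mu(1-\mu)}{M+2}$. The center of the interval is fixed independently of the approximation: applying the affine map to $\mu$ reproduces $\tilde f = 1-\tfrac{3}{2}\mu$ exactly, matching \eqref{eqn:tildeF} (equivalently \eqref{eqn:checkf_arb}). Rewriting $\mu$ through $\tilde f$ gives $\mu = \tfrac{2(1-\tilde f)}{3}$ and $1-\mu = \tfrac{2\tilde f+1}{3}$, hence $\mu(1-\mu) = \tfrac{2(1-\tilde f)(2\tilde f+1)}{9}$, which is the algebra that will ultimately produce the stated radius.

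Next I would invoke the convergence of the beta distribution to a normal one in the regime $e_{\SM}\gg1$, $M-e_{\SM}\gg1$ (the same fact used for \eqref{eqn:eMNgg1} in \ref{sec:Tstar}), so that for large $M$ the quantile function obeys $I^{-1}_{z}(a,b)\approx \mu + \sigma\,Q(z)$. Substituting the two symmetric quantiles, at probabilities $\tfrac{1-\alpha}{2}$ and $\tfrac{1+\alpha}{2}$ whose separation carries probability $\alpha$, and using the symmetry $Q(\tfrac{1-\alpha}{2}) = -Q(\tfrac{1+\alpha}{2})$, turns the posterior interval for $\mathcal{E}$ into $[\mu - \sigma Q(\tfrac{1+\alpha}{2}),\,\mu + \sigma Q(\tfrac{1+\alpha}{2})]$. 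Transporting through $\bar f = 1-\tfrac{3}{2}\mathcal{E}$ flips and rescales this to a symmetric interval centred at $\tilde f$ with radius $\tfrac{3}{2}\sigma Q(\tfrac{1+\alpha}{2})$; inserting $\tfrac{3}{2}\sigma = \sqrt{(2\tilde f+1)(1-\tilde f)/(2(M+2))}$ from the previous step yields exactly \eqref{eqn:Cid2}.

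The main obstacle is the control of the Gaussian approximation rather than the algebra: one must argue that both shape parameters grow with $M$ so that the beta law is genuinely close to normal, and that the approximation of the extreme quantiles $Q(\tfrac{1\pm\alpha}{2})$ is accurate for $\alpha$ near $1$, where the tails dominate. Since the corollary is explicitly an asymptotic ($M\gg1$) statement, I would present it as a limiting simplification and lean on the beta-to-normal convergence already cited for the derivation of $T^*$, rather than attempting a finite-$M$ error bound.
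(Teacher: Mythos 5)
Your proposal is correct and takes essentially the same route as the paper: both rest on the large-$M$ normal approximation of the beta posterior of $\mathcal{E}$ from \thref{lem:postk}, transport it through the affine map $\bar f = 1-\tfrac{3}{2}\varepsilon$, and use the identity $\big(\tfrac{3}{2}\big)^2\mu(1-\mu)/(M+2) = (2\tilde f+1)(1-\tilde f)/\big(2(M+2)\big)$ to obtain the stated radius. The only cosmetic difference is that you approximate the quantiles $I^{-1}_z$ appearing in the exact interval of \thref{thm:Bayid}, whereas the paper approximates the posterior density itself and then reads off the symmetric interval; these are equivalent formulations of the same argument.
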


\begin{proof}
According to \cite[Sec. 10.7]{Das:B10} and \thref{lem:postk}, when $M\gg 1$, the posterior distribution of the average \ac{QBER} $\mathcal{E}$ can be approximated by a normal distribution with mean
\begin{align}
\mathbb{E}\big[\mathcal{E} \big|\varepsilon_{\Set M} \big]= \frac{e_{\Set M}+\frac{1}{2} }{e_{\Set M}+\frac{1}{2} + M- e_{\Set M}+\frac{1}{2}} = \varepsilon_{\mathrm E}\label{eqn:normalmean}
\end{align}
and variance
\begin{align}
\mathbb{V}\big[\mathcal{E} \big|\varepsilon_{\Set M} \big]&=\frac{\varepsilon_{\mathrm E}(1-\varepsilon_{\mathrm E})}{e_{\Set M}+\frac{1}{2} + M- e_{\Set M}+\frac{1}{2}+1} \nonumber\\
&= \frac{\varepsilon_{\mathrm E}(1-\varepsilon_{\mathrm E})}{M+2},\label{eqn:normalvar}
\end{align}
where $ \varepsilon_{\mathrm E}=\varepsilon_{\Set M} + \frac{\frac{1}{2}-\varepsilon_{\Set M}}{M+1}$.
According to \eqref{eqn:normalmean} and \eqref{eqn:normalvar}, the posterior distribution of $\bar{F}$ is a normal distribution with mean
\begin{align}
\mathbb{E}\big[\bar{F} \big|\varepsilon_{\Set M} \big] = 1 -\frac{3}{2}\mathbb{E}\big[\mathcal{E} \big|\varepsilon_{\Set M} \big] = \tilde{f}
\end{align}
and variance 
\begin{align}
\mathbb{V}\big[\bar{F} \big|\varepsilon_{\Set M} \big] &=\Big(\frac{3}{2}\Big)^2\mathbb{V}\big[\mathcal{E} \big|\varepsilon_{\Set M} \big]\nonumber
\\ &=\frac{(2\tilde{f}+1)(1-\tilde{f})}{2(M+2)}.
\end{align}
This distribution leads to the credible interval defined in \eqref{eqn:Cid2}.
This completes the proof of \thref{cor:cre_inter}.
\end{proof}

\bibliographystyle{quantum}
\bibliography{Quantum_Ruan_V4}

\end{document}